\def\order{\gamma}
\newcommand{\R}{\mathbb{R}}
\newcommand{\Cc}{\mathbb{C}}
\newcommand{\N}{\mathbb{N}}
\def\n{\nu}
\newcommand{\aaa}{\mathsf{a}}
\newcommand{\C}{\mathscr{C}}
\newcommand{\Dg}{{\textsc{\textbf{D}}}}   
\newcommand{\Ig}{{\textsc{\textbf{I}}}}   
\renewcommand{\(}{\left( }
\renewcommand{\)}{\right) }
\newcommand{\resp}{{\emph{resp.{ }}}}
\newcommand{\diag}[1] {\mathsf{diag}\{ #1 \}}
\newcommand{\x}{\times}
\newcommand{\tab}{\quad}
\newcommand{\defref}[1]{Definition \ref{#1}}
\newcommand{\figref}[1]{Figure \ref{#1}}
\newtheorem{thm}{Theorem}[section] 
\newtheorem{prop}{Proposition}[section] 
\newtheorem{lem}{Lemma}[section] 
\newtheorem{defn}{Definition}[section] 
\newtheorem{rem}{Remark}[section] 
\newtheorem{alg}{Algorithm}[section]
\begin{document}

\title{Flatness for linear fractional systems with application to a thermal system}

\author[*]{St\'{e}phane Victor} 
\author[*]{Pierre Melchior}  
\author[**]{Jean L\'{e}vine} 
\author[*]{Alain Oustaloup}
    
\affil[*]{IMS -- UMR 5218 CNRS, 
IPB/Enseirb-Matmeca \\ Universit\'{e} de Bordeaux,
351 cours de la Lib\'{e}ration, 33405 Talence cedex -- France  --
    \{stephane.victor, pierre.melchior, alain.oustaloup\}@ims-bordeaux.fr}    
    
\affil[**]{CAS-Mathématiques et Systèmes, MINES-ParisTech,
35 rue Saint-Honoré, 77300 Fontainebleau -- France --
    jean.levine@mines-paristech.fr}
        
\date{}

\maketitle

\begin{paragraph}{Keywords} Polynomial matrices, fractional systems, differential flatness, thermal system, trajectory planning.
\end{paragraph}

\begin{abstract}
This paper is devoted to the study of the flatness property of linear time-invariant fractional systems. In the framework of polynomial matrices of the fractional derivative operator,
we give a characterization of fractionally flat outputs and a simple algorithm to compute them. We also obtain a characterization of the so-called fractionnally $0$-flat outputs. 
We then present an application to a two dimensional heated metallic sheet, 
whose dynamics may be  approximated by a fractional model of order 1/2. The trajectory planning of the temperature at a given point of the metallic sheet is obtained thanks to the fractional flatness property, without integrating the system equations. The pertinence of this approach is discussed on simulations.
\end{abstract}


\section{Introduction}\label{sec:intro}

\emph{Fractional} (or \emph{non integer}) differential equations prove to be particularly pertinent for the modeling of some classes of systems such as thermal \cite{Old74,Bat00}, electro-chemical \cite{Dar97}, viscoelastic \cite{Mor02}, nuclear magnetic resonance \cite{Mag08} or bio\-logical ones \cite{Som07a}, whereas integer models may lead to repre\-sentations with an excessive number of state variables or poorly reproducing some dynamical aspects \cite{Mal09}. 
Fractional calculus dates back to Euler, Leibniz, Liouville, Riemann and many o\-thers (see e.g. \cite{Pod99a,Sam93}).
For relations between Mikusi\'{n}ski's operational calculus and fractional integration the reader may refer to \cite{Bat75}. In control applications, se\-minal works on modeling, identification and robust control design may be found in Oustaloup \cite{Ous95,Ous14}. For
controllability, observability, mini\-mal realization aspects of fractional systems in the algebraic framework of mo\-dule theory via Mikusi\'{n}ski's calculus, one may refer to Fliess and Hotzel \cite{Hot98b,Fli97a}. {Stability and stabilization in relation to controllability and observability have  been extensively studied by Matignon and d'Andrea-Novel \cite{Mat96a,Mat97}. }

Motion planning for fractional systems becomes a major issue if working around an equilibrium point appears to be too restrictive. If we need to properly design a feasible trajectory to be tracked, no systematic approach is presently available in this context. 
One possible approach consists in extending the so-called flatness-based trajectory design (see \cite{Fli92,Fli95,Fli99}, the books \cite{Rud03,Rot04,Sir04,Lev09} and the references therein) to fractional systems. 
Recall that, roughly speaking, a system described by \emph{ordinary differential equations} is said to be \emph{differentially flat} if  and only if there exists an output vector, called \emph{flat output}, of the same dimension as the control vector, such that all system variables can be expressed as functions of this output and its successive time derivatives in finite number. 
{The role played by flat outputs for motion planning is thus clear: all system trajectories being parametrized by such an output, which, in addition, does not need to satisfy any dif\-ferential equation, it suffices to construct an output curve by interpolation and deduce the desired state trajectory which, moreover, may be obtained without integrating the system equations, thus making this solution particularly simple and elegant.}
{Our aim in this paper is therefore (1) to extend this property to fractional systems, a notion that will be called \emph{fractional flatness}, (2) to characterize \emph{fractionally flat systems} and \emph{fractionally flat outputs}, (3) to provide an algorithm to compute such flat outputs, and (4) to show the usefulness of our approach on the motion planning example of a thermal system.}
Preliminary results on fractional flatness have been presented by some of the authors \cite{Vic10a,Vic11b} without algebraic foundations and in the particular case where the matrix $B$ of System (\ref{eq:poly_SSR}) (see Section~\ref{subsec:fracsys}) is a 0-degree polynomial matrix, leading to a less precise characterization of the so-called \emph{defining matrices}.  

{ The theoretical contributions of this paper are: a rigo\-rous algebraic definition of the fractional flatness pro\-perty for linear time invariant fractional systems, and its characterization in the framework of polynomial matrices of the fractional derivative operator; as a by-product, we recover the equivalence between  fractional flatness and controllability (see e.g. \cite{Hot98b,Fli97a});  then a simple algorithm to compute fractionnally flat outputs is obtained as well as a specia\-lization of the latter results to the notion of fractional $0$-flatness.

Concerning the applications, we extend previous results of Rudolph \cite{Rud03}, Laroche et al. \cite{Lar98,Lar00}
and others, establishing a link between the heat equation in one dimension, with various boundary conditions, and an approximate fractional system representation of order $\frac{1}{2}$, to the heating of a two dimensional sheet, modeled by the heat equation in the positive orthant of the $(x,y)$-plane and controlled by the heat density flux through the $y$-axis; we further show that the associated fractional approximation is flat with an easily obtained flat output that, to the authors know\-ledge, has no simple physical interpretation as a variable related to the aforementioned heat equation; finally, we show how to use these results to plan rest-to-rest trajectories of the temperature at a given point of the metallic sheet.
}

After recalling the basics of fractional calculus in Section \ref{sec:intro}, Section \ref{sec:FC} presents the fractional polynomial algebraic framework. Then, the notion of fractional flatness is introduced in Section \ref{sec:Sortie_ plate_Matrice_poly}. In Section \ref{sec:thermal_app}, a two-dimensional thermal application is presented in simulation, 
to finally conclude in section \ref{sec:conclu}.

\section{Recalls on Fractional Calculus}\label{sec:FC}
\subsection{Fractional derivative}

Let $\order \in \R_{+}$,
$n=\lceil \order \rceil = E(\order) +1 = \min \{ k \in \N \vert k > \order\}$, where $\lceil. \rceil$ (resp. $E$) is the ceiling operator (resp. the integer part), and $\n \in [0,1[$ given by
$\n = n-\order$.
Let $\aaa$ be a given arbitrary real number and $f \in  \C^{\infty}([\aaa,+\infty[)$, the set of infinitely continuously differentiable functions from $[\aaa,+\infty[$ to $\R$. We denote by $f^{(k)}(t)$ the ordinary $k$-th order derivative of $f$ with respect to $t$ for every $k\in \N$.

The \emph{Riemann-Liouville derivative}\footnote{This definition is generally credited to Riemann when $\aaa\neq 0$ , to Liouville when $\aaa=-\infty$, and to Riemann-Liouville when $\aaa=0$. Here since $\aaa$ is arbitrary, we gather the Riemann and Riemann-Liouville cases in one and give the name ``Riemann-Liouville'' to this generic concept, the role played by the bound $\aaa$ being of minor importance.}, or more simply \emph{fractional derivative}, 
of order $\order=n-\n$, denoted by $\Dg^{\order}_\aaa$, is defined as  the
$n$-th order (ordinary) derivative of the Cauchy integral of order $\n$ of $f$ at time $t$ \cite{Mil93}:
\begin{equation}\label{eq:def_frac_int}
	\begin{aligned}
     \Dg^{\order}_\aaa f(t)  &= \Dg^{n}
        \left(\Ig^{\n }_\aaa f(t)
    \right)\\
    & \triangleq
    \left( {\frac{d}{{dt}}} \right)^{n }
    \left( \frac{1}{{\Gamma
    \left( {\n} \right)}}
    {\int_\aaa^t {\frac{f\left( \tau  \right)d\tau }{\left( {t
    - \tau } \right)^{1-\n} }} } \right)
    \end{aligned}
\end{equation}
where $\Ig^{\n }_\aaa f(t)$ is called the \emph{fractional primitive} of $f$ and the Euler's $\Gamma$
function, defined by:
\begin{equation}\label{eq:Gamma}
    \Gamma(x) = \int_0^\infty  {e^{ - t} t^{x - 1} dt},
    \quad\forall x\in\R^* \setminus\N^-,
\end {equation}
is the generalized factorial ($\forall n \in \N, \Gamma(n+1)= n!$). Note that $\n\Gamma(\n) = \Gamma(\n + 1)$ for all $\nu \in \R_{+}$.
If $\order = n \in \N$, the fractional  and ordinary derivatives coincide ($\Dg^\order_\aaa f(t)= \Dg^n f(t)$) and if $\order<0$, $\Dg^\order_\aaa f(t) = \Ig^{-\order}_\aaa f(t)$.
Indeed, when $n=0$, the differentiation (\ref{eq:def_frac_int}) boils down to an integration.

Let us denote  
\begin{equation}\label{eq:Ha}
 \mathfrak{H}_\aaa \mathop  =  \limits^\Delta \left\{ f:\R\mapsto\R| f \in \C^{\infty}([\aaa,+\infty[), f(t) = 0, \forall t\leq \aaa\right\}.
 \end{equation}
As a consequence of Propositions~\ref{prop:DnInu_InuDn} and \ref{prop:Dcom} of the Appendix (see also e.g. \cite{Pod99a}), $\Dg_\aaa^\order$ is an endomorphism from $\mathfrak{H}_\aaa$
to itself and $\mathfrak{H}_\aaa$ 
may be considered as the domain of $\Dg_\aaa^\order$. For simplicity's sake, the notation $\Dg_\aaa^\order$ is used in place of ${\Dg_\aaa^\order}_{\big | \mathfrak{H}_\aaa}$.
We consider arbitrary polynomials of the indeterminate $\Dg_\aaa^\order$ with real coefficients, of the form $\sum\limits_{k = 0}^K{{c_k} \Dg_\aaa^{{k \order} } }$ and call them $\Dg_\aaa^\order$-polynomials.
Their degree is defined as usually.

We denote by $\R\left[\Dg_\aaa^\order\right]$ the set of such $\Dg_\aaa^\order$-polynomials (again whose domain is restricted to $\mathfrak{H}_\aaa$) 
endowed with the usual addition and multiplication of polynomials (denoted as usual by $+$ and $\times$). The reader may immediately verify that $\(\R\left[\Dg_\aaa^\order\right],+,\times\)$ is a (commutative) principal ideal domain. 
The main properties of $\R\left[\Dg_\aaa^\order\right]$ are recalled in the Appendix.

\subsection{$\Dg_\aaa^\order$-polynomial matrices}

Interpreting the fractional derivative operator in terms of Mikusi\'{n}ski's operational calculus as in Battig and Kalla \cite{Bat75}, a system theoretic approach has been developed by Fliess and Hotzel \cite{Fli97a} where the field of Mikusi\'{n}ski's operators $\mathcal{M}$ is defined as the field of fractions of the commutative integral domain $\mathcal{C}$ of continuous functions defined over $\left[ 0, \infty\right[$ endowed with the addition and convolution product \cite{Mik83,Fli97a}. 
$\mathcal{M}$ can also be considered as an $\R\left[s^\order\right]$-module, where $s^\order$ is the Laplace operator associated to $\Dg_\aaa^\order$.

Note that for all $A\in\mathcal{M}$ and all $f \in \mathcal{C}$, the result $Af$ is defined in the sense of distributions, whereas in  the previous approach, namely in $\R\left[\Dg_\aaa^\order\right]$, the action  of a polynomial of $ \Dg_\aaa^\order$ applied to a function of $ \mathfrak{H}_\aaa$ is always well-defined in $ \mathfrak{H}_\aaa$. 
However the results stated in the remainder of this paper are equally well established in both approaches. We only state them in the setting of $\R\left[\Dg_\aaa^\order\right]$. Their (straightforward) adaptation to $\mathcal{M}$ is left to the reader. 
{Note also that the parameterization algorithms of \cite{Chy05} might be adaptable to this fractional context to yield comparable results, though apparently in a more indirect way.}

If $p,q \in \N$, we call $\R\left[\Dg_\aaa^\order\right]^{p\times q}$ the set of $\Dg_\aaa^\order$-polynomial matrices of size $(p\times q)$, i.e. whose entries are $\Dg_\aaa^\order$-polynomials.
When $p=q$, the group of unimodular $\Dg_\aaa^\order$-polynomial matrices, $GL_{p}\(\R\left[\Dg_\aaa^\order\right]\)$, defines the set of invertible (square) $\Dg_\aaa^\order$-polynomial matrices whose inverse is also a $\Dg_\aaa^\order$-polynomial matrix.
We denote by $I_p$ the $p\times p$ identity matrix and by $0_{p\times q}$ the $p\times q$ zero matrix.
$\Dg_\aaa^\order$-polynomial matrices enjoy the following important property \cite[Chap. 8]{Coh85}:
\begin{thm}[Smith diagonal decomposition]\label{th:smith}
                   Given a matrix \sloppy$A\in \R\left[\Dg_\aaa^\order\right]^{p\times q}$, with $p \leq q$ (\resp{} $p\geq q$), there exist two matrices $S\in GL_{p}\(\R\left[\Dg_\aaa^\order\right]\)$ and $T\in GL_{q}\(\R\left[\Dg_\aaa^\order\right]\)$ such that:
                    \begin{equation}\label{eq:smith}
                    SAT = \left[ {\Delta \tab 0_{p,q-p}} \right]\tab (\resp{} = \left[ {\begin{array}{*{20}c}
                    \Delta   \\
                    0_{p-q,q}  \\
                    \end{array}} \right]),
                    \end{equation}
                    where $\Delta = \diag{\delta _1 , \ldots ,\delta _\sigma  ,0, \ldots ,0} \in \R\left[\Dg_\aaa^\order\right]^{p\times p}$ (resp. $\R\left[\Dg_\aaa^\order\right]^{q\times q}$). In $\Delta$, the integer $\sigma \leq \min (p,q)$ is the \emph{rank} of $A$ and every non zero $\Dg_\aaa^\order$-polynomial $\delta _i$, for $i=1,\dots,\sigma$, is a divisor of $\delta _j$ for all $\sigma \ge j \ge i$.
                \end{thm}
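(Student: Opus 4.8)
The plan is to exploit the fact, noted just above, that $\(\R\left[\Dg_\aaa^\order\right],+,\times\)$ is a principal ideal domain. More precisely, since every element is an ordinary polynomial in the single indeterminate $\Dg_\aaa^\order$ with real coefficients, the ring $\R\left[\Dg_\aaa^\order\right]$ is isomorphic to the usual polynomial ring $\R[X]$ under $\Dg_\aaa^\order\mapsto X$, and is therefore \emph{Euclidean}, the Euclidean function being the polynomial degree. This lets me run the classical Smith reduction using division with remainder together with elementary row and column operations, all of which are realized by unimodular matrices in $GL_p\(\R\left[\Dg_\aaa^\order\right]\)$, $GL_q\(\R\left[\Dg_\aaa^\order\right]\)$ respectively. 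I would treat the case $p\leq q$, the case $p\geq q$ being obtained by transposition.

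If $A=0$ there is nothing to prove, so assume $A\neq 0$. First I would select, among all nonzero entries, one of least degree and bring it to position $(1,1)$ by a row and a column permutation, both unimodular; call this pivot $\delta$. Using division with remainder I write each first-row entry as $a_{1j}=\beta_j\,\delta + r_j$ with $r_j=0$ or $\deg r_j<\deg\delta$, and subtracting $\beta_j$ times the first column from the $j$-th column replaces $a_{1j}$ by $r_j$; I proceed symmetrically on the first column. Whenever some remainder is nonzero it has strictly smaller degree than $\delta$, so moving it to the pivot position and repeating makes the pivot degree strictly decrease; hence after finitely many passes every off-pivot entry of the first row and first column vanishes.

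Next I would enforce the divisibility condition at the top level. If after clearing some entry $a_{ij}$ with $i,j\geq 2$ is not divisible by the current pivot $\delta$, I add row $i$ to row $1$ so that $a_{ij}$ reappears in the first row, and rerun the clearing procedure; the new pivot properly divides $\delta$ and has strictly smaller degree, so this terminates, and upon termination the pivot, renamed $\delta_1$, divides every entry of $A$. Deleting the first row and column leaves a matrix in $\R\left[\Dg_\aaa^\order\right]^{(p-1)\times(q-1)}$ all of whose entries are multiples of $\delta_1$; by induction on $\min(p,q)$ it admits a decomposition $\diag{\delta_2,\dots,\delta_\sigma,0,\dots,0}$ with $\delta_2\mid\delta_3\mid\cdots\mid\delta_\sigma$, and since every entry of this submatrix is divisible by $\delta_1$, so is its leading invariant factor $\delta_2$, giving $\delta_1\mid\delta_2$ and hence the full chain. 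Reassembling the unimodular factors accumulated at each stage produces the required $S\in GL_p\(\R\left[\Dg_\aaa^\order\right]\)$ and $T\in GL_q\(\R\left[\Dg_\aaa^\order\right]\)$ with $SAT=\left[\Delta\ \ 0_{p,q-p}\right]$.

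Finally, the integer $\sigma$ equals the rank of $A$: pre- and post-multiplication by unimodular matrices (which are invertible over the field of fractions of $\R\left[\Dg_\aaa^\order\right]$) preserves the rank over that field, and a diagonal matrix has field-rank equal to its number of nonzero diagonal entries, so that number is an invariant of $A$. I expect the only genuinely delicate points to be (i) the termination of the reduction, which rests entirely on the strict decrease of the Euclidean degree under each division step, and (ii) the bookkeeping guaranteeing that the successive pivots inherit the relation $\delta_i\mid\delta_{i+1}$; both are standard over a Euclidean domain and require no feature of the fractional operator beyond the ring isomorphism with $\R[X]$.
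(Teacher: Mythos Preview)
Your argument is correct and is precisely the classical Smith reduction over a Euclidean domain, carried over via the ring isomorphism $\R\left[\Dg_\aaa^\order\right]\simeq\R[X]$. Note, however, that the paper does not supply its own proof of this theorem: it is stated with a reference to Cohn \cite{Coh85}, and the paragraph that follows merely remarks that the decomposition is obtained in practice by the same algorithm as over $\R[s]$, namely by repeatedly computing greatest common divisors (with further references to \cite{Gan60,Wol74,Chy05}). Your proposal spells out exactly that algorithm, so it is fully in line with what the paper indicates but does not detail.
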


 This decomposition over $\R\left[\Dg_\aaa^\order\right]$ is done in practice by the same algorithm as the one over $\R\left[s\right]$, $s$ being a complex variable. It consists in computing a diagonal form by repeatedly using greatest common  divisors (see e.g. \cite{Gan60,Wol74}. {See also \cite{Chy05} for  more general algorithms in the context of Ore algebras).}

\begin{defn}[Hyper-regularity \cite{Lev09}] \label{def:hyper-regularity}
Given a matrix \sloppy$A\in  \R\left[\Dg_\aaa^\order\right]^{p\times q}$, we say that $A$ is hyper-regular if, and only if, in (\ref{eq:smith}), we have $\Delta = I_{\min (p,q)}$.
\end{defn}

Note that a square matrix $A\in \R\left[\Dg_\aaa^\order\right]^{p\times p}$ is hyper-regular if, and only if, it is unimodular.

A straightforward adaptation of \cite[Section II.C]{Ant11} to $\Dg_\aaa^\order$-polynomial matrices reads:
 \begin{prop} \label{prop:Smith2}
(i) A matrix  $A\in \R\left[\Dg_\aaa^\order\right]^{p\times q}$, with $p<q$ is hyper-regular if, and only if,  it possesses a right-inverse, i.e. there exists $T$ in $ GL_{q}\(\R\left[\Dg_\aaa^\order\right]\)$ such that 
                    $AT = \left[ {I_p \tab 0_{p\times(q-p)}} \right]$.

(ii) A matrix $A\in \R\left[\Dg_\aaa^\order\right]^{p\times q}$, with $p\geq q$ is hyper-regular if, and only if, it possesses a left-inverse, i.e. there exists $S$ in $ GL_{p}\(\R\left[\Dg_\aaa^\order\right]\)$ such that 
                   $ SA =\left[ {\begin{array}{*{20}c}
                    I_q   \\
                    0 _{(p-q)\times q} \\
                    \end{array}} \right].$
 \end{prop}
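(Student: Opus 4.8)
The plan is to reduce both claims to the Smith diagonal decomposition (\thref{th:smith}) together with \defref{def:hyper-regularity}, and to obtain (ii) from (i) by transposition, so that the only genuine work lies in the equivalence (i). Throughout I use that $\R\left[\Dg_\aaa^\order\right]$ is a commutative integral domain whose units are exactly the nonzero real constants (the degree-$0$ polynomials), so that a product of ring elements is a unit if and only if each factor is a unit, and that a square matrix is invertible over the ring precisely when its determinant is such a unit.

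For the ``only if'' direction of (i), suppose $A$ is hyper-regular with $p<q$. Since $\min(p,q)=p$, \defref{def:hyper-regularity} yields $S\in GL_p\left(\R\left[\Dg_\aaa^\order\right]\right)$ and $T\in GL_q\left(\R\left[\Dg_\aaa^\order\right]\right)$ with $SAT=[I_p \tab 0_{p\x(q-p)}]$. Left-multiplying by $S^{-1}$ gives $AT=[S^{-1} \tab 0_{p\x(q-p)}]$, and post-multiplying by the unimodular block-diagonal matrix $\diag{S,I_{q-p}}$ clears the factor $S^{-1}$, producing $AT'=[I_p \tab 0_{p\x(q-p)}]$ with $T'=T\,\diag{S,I_{q-p}}\in GL_q\left(\R\left[\Dg_\aaa^\order\right]\right)$. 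This exhibits the desired right inverse.

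The substantive step, which I expect to be the main obstacle, is the converse of (i). Assume $AT=[I_p \tab 0_{p\x(q-p)}]$ for some $T\in GL_q\left(\R\left[\Dg_\aaa^\order\right]\right)$, and fix a Smith decomposition $SAT_0=[\Delta \tab 0_{p\x(q-p)}]$ with $\Delta=\diag{\delta_1,\ldots,\delta_\sigma,0,\ldots,0}$. Substituting $A=S^{-1}[\Delta \tab 0]T_0^{-1}$ and setting $V=T_0^{-1}T\in GL_q\left(\R\left[\Dg_\aaa^\order\right]\right)$, I obtain $[\Delta \tab 0]V=[S \tab 0]$; comparing the first $p$ columns yields $\Delta V_{11}=S$, where $V_{11}$ is the top-left $p\x p$ block of $V$. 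Taking determinants, $\det(\Delta)\det(V_{11})=\det(S)$ is a unit, hence $\det(\Delta)$ is a unit, and in particular nonzero. This forces $\sigma=p$ (otherwise $\Delta$ carries a zero on its diagonal and $\det(\Delta)=0$), so that $\det(\Delta)=\delta_1\cdots\delta_p$ is a unit and therefore each $\delta_i$ is a unit. Normalizing by the unimodular matrix $\diag{\delta_1^{-1},\ldots,\delta_p^{-1}}$ turns $\Delta$ into $I_p$, i.e. $\left(\diag{\delta_1^{-1},\ldots,\delta_p^{-1}}\,S\right)AT_0=[I_p \tab 0]$, so $A$ is hyper-regular by \defref{def:hyper-regularity}. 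The delicate points are purely the block bookkeeping and the appeal to the structure of the units of $\R\left[\Dg_\aaa^\order\right]$.

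Finally I would deduce (ii) by transposition. Transposition preserves unimodularity and sends the decomposition $SAT=[\Delta \tab 0]$ of $A$ to $T^{\top}A^{\top}S^{\top}=\begin{pmatrix}\Delta\\ 0\end{pmatrix}$, a Smith decomposition of $A^{\top}$ with the same $\Delta$; hence $A\in\R\left[\Dg_\aaa^\order\right]^{p\x q}$ is hyper-regular if and only if $A^{\top}\in\R\left[\Dg_\aaa^\order\right]^{q\x p}$ is. When $p>q$, part (i) applies to $A^{\top}$ (which has strictly fewer rows than columns) and provides $U\in GL_p\left(\R\left[\Dg_\aaa^\order\right]\right)$ with $A^{\top}U=[I_q \tab 0_{q\x(p-q)}]$; transposing gives $U^{\top}A=\begin{pmatrix}I_q\\ 0_{(p-q)\x q}\end{pmatrix}$ with $S=U^{\top}\in GL_p\left(\R\left[\Dg_\aaa^\order\right]\right)$, which is exactly the left inverse claimed. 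The boundary case $p=q$ falls under the remark preceding the statement: hyper-regularity then means unimodularity, and $S=A^{-1}$ provides the required (two-sided, hence left) inverse.
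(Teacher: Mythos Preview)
Your argument is correct. The ``only if'' directions are routine consequences of \defref{def:hyper-regularity}, and your ``if'' direction for (i) --- extracting $\Delta V_{11}=S$ from the block product and then using that $\det(S)$ is a unit to force each $\delta_i$ to be a unit --- is a clean and complete proof; the transposition trick for (ii) is also fine since $\det(M^{\top})=\det(M)$ over a commutative ring.

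As for comparison with the paper: there is essentially nothing to compare, since the paper does not give a proof but simply states that the proposition is ``a straightforward adaptation of \cite[Section II.C]{Ant11} to $\Dg_\aaa^\order$-polynomial matrices.'' Your write-up therefore supplies a self-contained justification where the paper defers to an external reference. The route via Smith normal form and a determinant argument is the natural one and is in the same spirit as the column/row-reduction viewpoint alluded to in \cite{Ant11}; there is no meaningful methodological divergence to flag.
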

{Note, again according to \cite{Ant11}, that the technique of row- or column-reduction described in this reference requires less computations for left- or right-inverses than the above mentioned Smith decomposition and might indeed be preferred.}

\section{Linear fractionally flat systems}\label{sec:Sortie_ plate_Matrice_poly}

\subsection{Linear fractional systems}\label{subsec:fracsys}
We consider a linear fractional system given by the following representation
    \begin{equation}\label{eq:poly_SSR}
         {\begin{array}{*{20}c}
           Ax = Bu 
        \end{array}} 
    \end{equation}
with state, or partial state, $x$ of dimension $n$,  input $u$ of dimension $m$, $A\in  \R\left[\Dg_\aaa^\order\right]^{n\times n} $ and $B\in  \R\left[\Dg_\aaa^\order\right]^{n\times m}$. $B$ is assumed to be of rank $m$, with $1\leq m\leq n$.
For System (\ref{eq:poly_SSR}), we consider (see e.g. \cite{Fli90a,Fli92a,Pol98,Tre92,Lev09,Ant14}):

\begin{itemize}
\item its \emph{behavior} $\ker \begin{bmatrix}A&-B\end{bmatrix}$, where the kernel is taken w.r.t. the signal space $\mathfrak{H}_\aaa $ defined in (\ref{eq:Ha}), i.e. the set $\left\{\begin{bmatrix}x\\u\end{bmatrix} \in \left( \mathfrak{H}_\aaa \right)^{n+m} \vert \begin{bmatrix}A&-B\end{bmatrix}\begin{bmatrix}x\\u\end{bmatrix} = 0 \right\}$;
\item and its \emph{system module} ${\mathfrak M}_{A,B}$, i.e. the quotient module
    \begin{equation}\label{eq:def_mod}
{\mathfrak M}_{A,B} = \R[\Dg_{\aaa}^{\order}]^{1\x (n+m)}/  \R[\Dg_{\aaa}^{\order}]^{1\x n}\begin{bmatrix}A&-B\end{bmatrix},
\end{equation}
 where 
$\R[\Dg_{\aaa}^{\order}]^{1\x (n+m)}$ is the set of row vectors 
with components in  $\R[\Dg_{\aaa}^{\order}]$ and where $\R\left[\Dg_\aaa^\order\right]^{1\x n}\begin{bmatrix}A,-B\end{bmatrix}$ is the module generated by the rows of the $n\x (n+m)$ matrix $\begin{bmatrix}A&-B\end{bmatrix}$.
\end{itemize}

According to \cite{Fli90a}, ${\mathfrak M}_{A,B}$ can be decomposed into the direct sum:
${\mathfrak M}_{A,B}={\mathcal T} \oplus {\mathcal F}$
where the uniquely defined module ${\mathcal T}$ is torsion and $\mathcal F$ is free. $\mathcal F$ is unique up to isomorphism.
Also, the system is said $F$-controllable
if and only if ${\mathcal T} =\{0\}$, or in other words, if and only if ${\mathfrak M}_{A,B}$ is free.

\subsection{Fractional flatness}\label{subsec:F_Flatness}

From now on, it is assumed that the matrix $F \triangleq \begin{bmatrix}A&-B\end{bmatrix} \in \R\left[\Dg_\aaa^\order\right]^{n \times (n+m)}$ has full (left) row rank. We also denote by $ {\mathfrak M}_{F}  \triangleq {\mathfrak M}_{A,B}$ the system module for simplicity's sake. 
System (\ref{eq:poly_SSR}) reads:
\begin{equation}\label{eq:syst_implicite}
F  \left[  \begin{array}{*{20}c}
                    x  \\
                    u
                    \end{array} \right]= 0.
\end{equation}
Our definition of \emph{fractional flatness} is based on the notion of \emph{defining matrices} in the spirit of \cite{Lev03,Ant14}.
\begin{defn}\label{def:defining_matrices}
The system (\ref{eq:syst_implicite}) is called fractionally flat if, and only if, there exist matrices $P\in \R\left[\Dg_\aaa^\order\right]^{m \times (n+m)}$ and $Q\in \R\left[\Dg_\aaa^\order\right]^{(n+m)\times m}$ such that
\begin{equation}\label{def:mat}
Q \left(\mathfrak{H}_\aaa\right)^m  = \ker F \quad \textrm{and} \quad 
PQ = I_m.
\end{equation}
\end{defn}

In other words, there exists a matrix $P$ with right-inverse $Q$ over the ring $\R\left[\Dg_\aaa^\order\right]$ such that, for all $(x,u)$ satis\-fying $ F\left[  \begin{array}{*{20}c} x \\ u \end{array} \right] =0 $, we have $ y=P\left[  \begin{array}{*{20}c} x \\ u \end{array} \right]$ and 
$ \left[  \begin{array}{*{20}c} x  \\ u \end{array} \right] = Q y$.
The variable $y$, taking its values in $\left(\mathfrak{H}_\aaa\right)^m$, is called \emph{fractionally flat output} and the matrices $P$ and $Q$ are called \emph{defining matrices}.

The main result of this section is the following\footnote{A comparable result, in the context of linear time-varying differential-delay systems, may be found in \cite{Ant14}.}:

\begin{thm}\label{th:controllability}
We have the following equivalences: 
\begin{description}
\item (i) system (\ref{eq:syst_implicite}) is fractionally flat;
\item (ii) the system module ${\mathfrak M}_{F}$ is free;
\item (iii) the matrix $F$ is hyper-regular over $\R\left[\Dg_\aaa^\order\right]$.
\end{description}
\end{thm}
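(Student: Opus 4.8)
The plan is to prove the three-way equivalence as a cycle, combining the structure theory of finitely generated modules over the principal ideal domain $R := \R\left[\Dg_\aaa^\order\right]$ with the explicit reduced forms furnished by the Smith decomposition. Throughout I use that $F$ has full row rank $n$, so that its rank $\sigma$ in Theorem \ref{th:smith} equals $n$.

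First I would settle (ii) $\Leftrightarrow$ (iii) purely algebraically. Applying Theorem \ref{th:smith} to $F$ produces $S \in GL_n(R)$ and $T \in GL_{n+m}(R)$ with $SFT = [\Delta \; 0]$, where $\Delta = \diag{\delta_1,\dots,\delta_n}$ and, by full row rank, every $\delta_i$ is nonzero. Left multiplication by $S$ does not change the row module generated by $F$, and the automorphism $v \mapsto vT$ of $R^{1\times(n+m)}$ carries the row module of $[\Delta \; 0]T^{-1}$ onto that of $[\Delta \; 0]$; hence $\mathfrak{M}_F \cong \bigoplus_{i=1}^{n} R/(\delta_i) \oplus R^m$. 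This is exactly the torsion-plus-free splitting of \cite{Fli90a}, so $\mathfrak{M}_F$ is free if and only if the torsion part vanishes, i.e. if and only if every $\delta_i$ is a unit, i.e. $\Delta = I_n$, which by Definition \ref{def:hyper-regularity} means $F$ is hyper-regular.

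Next, for (iii) $\Rightarrow$ (i) I would build the defining matrices by unimodular completion. By Proposition \ref{prop:Smith2}(i), hyper-regularity gives $T \in GL_{n+m}(R)$ with $FT = [I_n \; 0]$; partition $T = [T_1 \; T_2]$ with $T_2$ of size $(n+m)\times m$ and $T^{-1} = \begin{bmatrix} U_1 \\ U_2 \end{bmatrix}$ with $U_2$ of size $m\times(n+m)$, and set $Q := T_2$, $P := U_2$. Then $PQ = I_m$ is read off from $T^{-1}T = I_{n+m}$, while $F = [I_n \; 0]T^{-1}$ gives $FQ = 0$, so $Q(\mathfrak{H}_\aaa)^m \subseteq \ker F$. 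Conversely, any $w \in \ker F$ satisfies $[I_n \; 0]\,T^{-1}w = 0$, so $T^{-1}w = \begin{bmatrix} 0 \\ z \end{bmatrix}$ with $z \in (\mathfrak{H}_\aaa)^m$ (the polynomial matrix $T^{-1}$ maps $(\mathfrak{H}_\aaa)^{n+m}$ into itself), whence $w = T_2 z = Qz$. Thus $Q(\mathfrak{H}_\aaa)^m = \ker F$ and $(P,Q)$ are defining matrices.

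Finally, for (i) $\Rightarrow$ (ii) I would push the defining matrices back into $\mathfrak{M}_F$. Let $\eta_1,\dots,\eta_m$ be the classes of the rows of $P$. They are $R$-linearly independent: if $aP = bF$ for row vectors $a,b$, right multiplication by $Q$ together with $FQ = 0$ and $PQ = I_m$ forces $a = 0$; this step uses only the two algebraic identities. It remains to show that the $\eta_j$ generate $\mathfrak{M}_F$, equivalently that $I_{n+m} - QP = \Lambda F$ for some $\Lambda \in R^{(n+m)\times n}$. This is the algebraic counterpart of the surjectivity half $\ker F \subseteq Q(\mathfrak{H}_\aaa)^m$ of the hypothesis, and I expect it to be the main obstacle: the passage from ``the operator $I_{n+m} - QP$ annihilates the entire behavior $\ker F$'' to ``its rows lie in the row module generated by $F$'' rests on the signal space being rich enough to separate the module, i.e. on $\mathfrak{H}_\aaa$ (or its Mikusi\'{n}ski counterpart) acting as an injective cogenerator over $R$. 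Granting this, $\{\eta_j\}$ is a free basis, $\mathfrak{M}_F \cong R^m$ is free, the cycle closes, and as a by-product one recovers the equivalence between fractional flatness and $F$-controllability.
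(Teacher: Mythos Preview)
Your route differs from the paper's: you argue (ii)$\Leftrightarrow$(iii), then (iii)$\Rightarrow$(i), then (i)$\Rightarrow$(ii), whereas the paper runs the cycle (i)$\Rightarrow$(iii)$\Rightarrow$(ii)$\Rightarrow$(i). Your Smith-form computation $\mathfrak{M}_F\cong\bigoplus_i \R[\Dg_\aaa^\order]/(\delta_i)\oplus \R[\Dg_\aaa^\order]^{m}$ settles (ii)$\Leftrightarrow$(iii) in one stroke---the paper only argues (iii)$\Rightarrow$(ii) at that stage---and your (iii)$\Rightarrow$(i) is exactly the construction behind Algorithm~\ref{alg1}, with the inclusion $\ker F\subseteq Q(\mathfrak{H}_\aaa)^m$ spelled out. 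For the step out of (i) the paper does something different: from the defining matrices it invokes the canonical decomposition of the morphism $F$ \cite{Bou70} to produce a section $R$ with $F\begin{pmatrix}R&Q\end{pmatrix}=\begin{pmatrix}I_n&0\end{pmatrix}$, and then asserts that $\begin{pmatrix}R&Q\end{pmatrix}$ is unimodular over $\R[\Dg_\aaa^\order]$, giving (iii) directly. The obstacle you isolate in your (i)$\Rightarrow$(ii)---passing from ``$(I_{n+m}-QP)$ annihilates the behavior $\ker F$'' to the matrix identity $I_{n+m}-QP=\Lambda F$---is genuine, and it is worth noting that the paper's shortcut does not really bypass it: promoting the abstract $\R[\Dg_\aaa^\order]$-module section to a \emph{polynomial-matrix} right inverse of $F$ is the same passage from signal-space identities to matrix identities that a cogenerator hypothesis on $\mathfrak{H}_\aaa$ would justify. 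Both proofs thus lean on the same unstated richness of the signal space; your version has the virtue of naming it explicitly.
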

\begin{proof}
(i) $\Longrightarrow$ (iii) Assuming system (\ref{eq:syst_implicite}) fractionally flat, according to Definition~\ref{def:defining_matrices}, there exist $P$ and $Q$ such that $Q \left(\mathfrak{H}_\aaa\right)^m  = \ker F$, or $FQ=0$, and $PQ = I_m$. According to the decomposition of morphisms (see e.g. \cite[A I, Th\'eor\`eme~3 p. 37]{Bou70}), since $Q$ is onto by definition, there exists an isomorphism of $\R\left[\Dg_\aaa^\order\right]$-modules $R: = \left( \mathfrak{H}_\aaa \right)^{n+m}/  \ker F \mapsto F  \left( \mathfrak{H}_\aaa \right)^{n+m}$ such that $F\left( \begin{matrix}R&Q\end{matrix}\right) = \left( \begin{matrix}I_{n}&0_{n\times m}\end{matrix} \right)$, which, together with the fact that  $R\in GL_{n}\(\R\left[\Dg_\aaa^\order\right]\)$ and that $Q$ has right-inverse, proves that $F$ is hyper-regular.

(iii) $\Longrightarrow$ (ii) If $F$ is hyper-regular, 
by Proposition~\ref{prop:Smith2}, 
there exists $T\in GL_{n+m}\(\R\left[\Dg_\aaa^\order\right]\)$ such that $FT= \left( \begin{matrix}I_{n}&0_{n\times m}\end{matrix} \right)$. Therefore the module ${\mathfrak M}_{F}$ is equi\-valent to the module ${\mathfrak M}_{FT} \simeq {\mathfrak M}_{ \left( \begin{matrix}I_{n}&0_{n\times m}\end{matrix} \right)}$ which is indeed free.

(ii) $\Longrightarrow$ (i) Assume that ${\mathfrak M}_{F}$ is free. According to the size of $F$, it admits $m$ independent variables $y_{1},\ldots, y_{m} \in \mathfrak{H}_\aaa$ as basis. Denote by $y=\left( y_{1},\ldots, y_{m}\right)^{T}$. If $\xi$ is an arbitrary element of ${\mathfrak M}_{F}$, we indeed have $y=T\xi$ and $\xi = S y$ for some suitable unimodular matrices $T$ and $S$. We readily get $TS=I_{m}$ and, since, by construction, $\xi= U\left[ \begin{array}{c}x\\u\end{array}\right]$ for some $\left[ \begin{array}{c}x\\u\end{array}\right]$ satisfying $F\left[ \begin{array}{c}x\\u\end{array}\right]=0$, i.e. $U^{-1}\xi \in \ker F$,  we get $U^{-1}S \left(\mathfrak{H}_\aaa\right)^{m}= \ker F$, thus proving that system (\ref{eq:syst_implicite}) is fractionally flat with $P= TU$ and $Q=U^{-1}S$. 
\end{proof}
\begin{rem}
Given (ii), a linear fractional system is flat if and only if it is F-controllable (see Section \ref{subsec:fracsys}).
\end{rem}

The following algorithm to compute a fractionally flat output is immediately deduced: 
\begin{alg}{Computation of fractionally flat output}\label{alg1}
~
\begin{description}
   \item[\textit{Input}:] The matrix $ F= \begin{bmatrix}A&-B\end{bmatrix} \in \R\left[\Dg_\aaa^\order\right]^{n\times (n+m)}$
   \item[\textit{Output}:]  ~~Defining matrices $P$ and $Q$, i.e. satisfying (\ref{def:mat}).   \item[\textit{Procedure}:]~
    	\begin{enumerate}
    	\item Use column-reduction to check if $F$ is hyper-regular. If not,  return  ``fail''.
    	\item Else, find $ W\in GL_{n+m}\(\R\left[\Dg_\aaa^\order\right]\)$, according to Proposition \ref{prop:Smith2} (i), such that
$ FW = \left[ I_n,\, 0_{n\times m}   \right] .$
	\item The defining matrices are given by:\\
$ Q\triangleq  W\left[  \begin{array}{*{20}c} 0_{n\times m}  \\ I_m \end{array} \right] ~ \textrm{and} ~  P\triangleq  \left[ 0_{m\times n},\, I_m \right]W^{-1}. $
  	\item Return $P$, $Q$ and a fractionally flat output given by $ y=P\left[  \begin{array}{*{20}c}x \\ u \end{array} \right]$.
	\end{enumerate}
\end{description}
\end{alg}
It is readily seen that this algorithm yields $FQ \left(\mathfrak{H}_\aaa\right)^m = 0$ and that $PQ= I_{m}$, which is precisely (\ref{def:mat}). Moreover, $x=  \left[ I_n,\, 0_{n\times m}   \right] Qy$ and $u= \left[ 0_{m\times n},\, I_{m}   \right] Qy$ identically satisfy (\ref{eq:syst_implicite}).

\subsection{Fractional 0-flatness}\label{subsec:0_F_Flatness}
In practical applications, it may be convenient to distinguish between state and input, 
and for linear controllable time-invariant systems, a set of flat outputs may be obtained via Brunovsk\'{y}'s canonical form (see e.g. \cite{Bru70,Kai80}) and do not depend on the input $u$. This property called \emph{fractional $0$-flatness}, reads: there exist $P_1\in \R\left[\Dg_\aaa^\order\right]^{m\times n} $ and $Q_1\in \R\left[\Dg_\aaa^\order\right]^{n\times m}$ such that $y=P_1x$, $x=Q_1y$, and $P_1Q_1=I_m$.  

\begin{defn}\label{def:0-flatness}
~
\begin{itemize}
\item A system is called fractionally $k$-flat, with $k\geq 1$, if and only if the maximal degree of the matrix $ P \begin{bmatrix} 0_{n,m}  \\ I_m \end{bmatrix}$ is equal to $k-1$. In this case, the output $y$ is called  \emph{fractionally $k$-flat output}.

\item It is said fractionally $0$-flat if $ P \begin{bmatrix}0_{n,m}  \\ I_m \end{bmatrix} = 0_{m}$. The associated output $y$ is called  \emph{fractionally $0$-flat output}.
\end{itemize}
\end{defn}
Fractional $0$-flatness is thus equivalent to the existence of $P$ and $Q$ as in \defref{def:defining_matrices} such that  $ P=\begin{bmatrix} P_1& 0_{m,n} \end{bmatrix}$ with $ P_1\in  \R\left[\Dg_\aaa^\order\right]^{m\times n} $ and $P_1Q_1=I_m$ where $ Q_1\triangleq \begin{bmatrix} I_n&0_{n,m} \end{bmatrix} Q$.

\begin{lem}[Elimination]\label{lem:elimination}
If $B$ is hyper-regular, there exists a unimodular matrix $M\in  \R\left[\Dg_\aaa^\order\right]^{n\times n}$ such that $ MB = \begin{bmatrix} I_{m}  \\0_{(n-m)\times m} \end{bmatrix}$. Moreover, there exist matrices $\tilde{F}\in  \R\left[\Dg_\aaa^\order\right]^{(n-m)\times n} $ and $R\in \R\left[\Dg_\aaa^\order\right]^{m\times n}$ such that System~(\ref{eq:poly_SSR})  is equivalent to $Rx=u$, $\tilde{F}x=0$.
\end{lem}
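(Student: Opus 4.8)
The plan is to prove the Elimination Lemma in two stages corresponding to its two assertions. The first assertion is essentially a restatement of Proposition~\ref{prop:Smith2}(ii) applied to $B$. Since $B\in\R\left[\Dg_\aaa^\order\right]^{n\times m}$ with $n\geq m$ and $B$ is hyper-regular by hypothesis, part (ii) of that proposition guarantees the existence of a left-inverse, i.e. a matrix $M\in GL_{n}\(\R\left[\Dg_\aaa^\order\right]\)$ such that $MB=\begin{bmatrix} I_{m}\\0_{(n-m)\times m}\end{bmatrix}$. This matrix $M$ is exactly the unimodular matrix claimed, so the first sentence of the lemma follows directly with no further work.

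For the second assertion the idea is to use $M$ to left-multiply the system equation and thereby split it into a ``control'' part and a ``constraint'' part. Starting from $Ax=Bu$, I would left-multiply by $M$ to obtain $MAx=MBu=\begin{bmatrix} I_{m}\\0_{(n-m)\times m}\end{bmatrix}u$. Now partition the matrix $MA\in\R\left[\Dg_\aaa^\order\right]^{n\times n}$ into its top $m$ rows and its bottom $n-m$ rows, writing $MA=\begin{bmatrix}R\\ \tilde{F}\end{bmatrix}$ with $R\in\R\left[\Dg_\aaa^\order\right]^{m\times n}$ and $\tilde{F}\in\R\left[\Dg_\aaa^\order\right]^{(n-m)\times n}$. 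Reading off the two blocks of the equation $MAx=\begin{bmatrix} I_{m}\\0_{(n-m)\times m}\end{bmatrix}u$, the top block gives $Rx=u$ and the bottom block gives $\tilde{F}x=0$, which is precisely the desired equivalent form.

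The remaining point to address carefully is the word \emph{equivalent}: I must check that no information is lost in passing from $Ax=Bu$ to the pair $Rx=u$, $\tilde{F}x=0$. Because $M$ is unimodular, left-multiplication by $M$ is an invertible operation on $\left(\mathfrak{H}_\aaa\right)^{n}$ (its inverse being left-multiplication by $M^{-1}\in GL_{n}$), so $Ax=Bu$ holds if and only if $MAx=MBu$ holds; hence the original system and the decomposed system have exactly the same behavior $\ker\begin{bmatrix}A&-B\end{bmatrix}$. I would phrase this as the single short remark that the two systems are related by an invertible (unimodular) transformation and therefore define the same solution set in $\left(\mathfrak{H}_\aaa\right)^{n+m}$.

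I do not expect any serious obstacle here; the lemma is a clean consequence of the Smith-type factorization. The only place demanding mild care is the equivalence claim, where one must invoke the invertibility of $M$ rather than merely apply it in one direction, and the bookkeeping of block sizes (ensuring $R$ has $m$ rows and $\tilde{F}$ has $n-m$ rows so that the identity and zero blocks line up). Both are routine once $M$ is in hand.
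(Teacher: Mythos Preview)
Your proposal is correct and follows essentially the same approach as the paper: invoke Proposition~\ref{prop:Smith2}(ii) to obtain the unimodular $M$, left-multiply $Ax=Bu$ by $M$, and partition $MA=\begin{bmatrix}R\\\tilde{F}\end{bmatrix}$ to read off $Rx=u$ and $\tilde{F}x=0$. Your treatment is in fact more explicit than the paper's, which leaves the equivalence claim (via invertibility of $M$) implicit.
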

\begin{proof} 
Setting $ MA \triangleq  \begin{bmatrix} R  \\ \tilde{F} \end{bmatrix}$ with $R$ of size $m\times n$ and $\tilde{F}$ of size $(n-m)\times n$,
we have: $\begin{bmatrix}
                    R  \\
                    \tilde{F}
                    \end{bmatrix}x = MAx = MBu = \begin{bmatrix}
                    I_{m}  \\
                    0_{(n-m)\times m}
                    \end{bmatrix}  u$. 
\end{proof}

\begin{thm}\label{th:0-flat} 
If $B$ is hyper-regular, the following statements are equivalent:
\begin{description}
\item (i) System (\ref{eq:poly_SSR}) is fractionally $0$-flat;
\item (ii) The system module \\
\centerline{${\mathfrak M}_{\tilde{F}} \triangleq \R\left[\Dg_\aaa^\order\right]^{1\times n}/
\R\left[\Dg_\aaa^\order\right]^{1\times (n-m)} \tilde{F}$} 
is free, with $\tilde{F}$ defined in Lemma~\ref{lem:elimination};
\item (iii) the matrix $\tilde{F}$ is hyper-regular over $\R\left[\Dg_\aaa^\order\right]$.
\end{description}
\end{thm}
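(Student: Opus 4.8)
The plan is to deduce the three equivalences from \thref{th:controllability}, applied not to $F$ but to the eliminated matrix $\tilde F$ produced by \lemmaref{lem:elimination} (legitimate here since $B$ is assumed hyper-regular). The guiding remark is that, once the input has been removed through $u=Rx$, fractional $0$-flatness of (\ref{eq:poly_SSR}) is nothing but fractional flatness, in the sense of \defref{def:defining_matrices}, of the autonomous constraint $\tilde Fx=0$, regarded as a ``system'' in the single variable $x\in\left(\mathfrak{H}_\aaa\right)^n$ with $m$ flat outputs. Since $\tilde F\in\R\left[\Dg_\aaa^\order\right]^{(n-m)\times n}$, and since the module $\mathfrak M_{\tilde F}$ of statement (ii) and the hyper-regularity of statement (iii) are exactly those attached to $\tilde F$ by \thref{th:controllability} (with $F$ there replaced by $\tilde F$ and the total dimension $n+m$ replaced by $n$), the whole statement will follow once this reduction is justified.

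First I would record the behavioral identity underlying \lemmaref{lem:elimination}. With $M$ unimodular and $F=\begin{bmatrix}A&-B\end{bmatrix}$, we have $(x,u)\in\ker F$ if and only if $MAx=MBu$, i.e. if and only if $Rx=u$ and $\tilde Fx=0$. Hence the projection $(x,u)\mapsto x$ is a bijection from $\ker F$ onto $\ker\tilde F$, with inverse $x\mapsto(x,Rx)$. I would also check here that $\tilde F$ has full row rank $n-m$, so that \thref{th:controllability} really applies: as $M$ is unimodular, $MF=\begin{bmatrix}R&-I_m\\ \tilde F&0\end{bmatrix}$ has the same rank $n$ as $F$, and the $-I_m$ block contributes exactly $m$ to that rank, forcing $\mathrm{rank}\,\tilde F=n-m$.

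Next I would translate $0$-flatness into flatness of $\tilde Fx=0$. Writing $Q=\begin{bmatrix}Q_1\\ Q_2\end{bmatrix}$ with the $x$-block $Q_1=\begin{bmatrix}I_n&0_{n,m}\end{bmatrix}Q$ as in the paragraph preceding \lemmaref{lem:elimination}, the identity $Q\left(\mathfrak{H}_\aaa\right)^m=\ker F$ combined with the behavioral bijection above is equivalent to the two conditions $Q_2=RQ_1$ and $Q_1\left(\mathfrak{H}_\aaa\right)^m=\ker\tilde F$; and the $0$-flatness requirement $P=\begin{bmatrix}P_1&0\end{bmatrix}$ with $P_1Q_1=I_m$ says precisely that $(P_1,Q_1)$ are defining matrices for $\tilde Fx=0$. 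Thus (i) holds if and only if there exist $P_1\in\R\left[\Dg_\aaa^\order\right]^{m\times n}$ and $Q_1\in\R\left[\Dg_\aaa^\order\right]^{n\times m}$ with $Q_1\left(\mathfrak{H}_\aaa\right)^m=\ker\tilde F$ and $P_1Q_1=I_m$, that is, if and only if $\tilde Fx=0$ is fractionally flat. The complementary relation $x=Q_1P_1x$ on $\ker\tilde F$ is then automatic, since any $x=Q_1z$ gives $Q_1P_1x=Q_1(P_1Q_1)z=Q_1z=x$.

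Finally I would apply \thref{th:controllability} verbatim to $\tilde F$: the reduced system $\tilde Fx=0$ is fractionally flat if and only if $\mathfrak M_{\tilde F}$ is free if and only if $\tilde F$ is hyper-regular, which chains with the previous paragraph to give (i) $\Leftrightarrow$ (ii) $\Leftrightarrow$ (iii). The one genuinely delicate step is the translation in the third paragraph: I must verify that the defining matrices of the original system that are independent of $u$ correspond exactly, through the projection onto the $x$-component, to defining matrices of $\tilde Fx=0$ — in particular that $Q_2=RQ_1$ is forced and that the kernel and image identifications survive restriction to the $x$-block. Everything else is a direct invocation of the already-established \thref{th:controllability}, together with the elementary rank bookkeeping guaranteeing its applicability.
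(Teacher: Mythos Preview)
Your proposal is correct and follows exactly the approach the paper itself takes: the paper's proof is the one-line remark ``Immediate consequence of \thref{th:controllability}'', and what you have written is precisely the careful unpacking of that consequence via the elimination of \lemmaref{lem:elimination}. Your additional bookkeeping (the full-row-rank check for $\tilde F$ and the verification that $Q_2=RQ_1$ is forced) is sound and fills in details the paper leaves implicit.
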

\begin{proof} Immediate consequence of Theorem~\ref{th:controllability}. See also \cite[Theorem~2]{Ant14} in a different context. 
\end{proof}

The previous algorithm is easily adapted to compute fractionally $0$-flat outputs:
\begin{alg}{Computation of fractionally $0$-flat output}\label{alg2}
~
\begin{description}
   \item[\textit{Input}:] The matrices $A$ and $B$ of System (\ref{eq:poly_SSR}) with $B$ hyper-regular.
   \item[\textit{Output}:]  \hspace{0.2cm} Defining matrices $P$ and $Q$, i.e. satisfying (\ref{def:mat}), with
   $P=\begin{bmatrix} P_1& 0_{m,m} \end{bmatrix}$, $P_1\in  \R\left[\Dg_\aaa^\order\right]^{m\times n} $,  $Q_1\triangleq \begin{bmatrix} I_n&0_{n,m} \end{bmatrix} Q$ and $P_1Q_1=I_m$.\hspace{0.3cm}
   \item[\textit{Procedure}:] \hspace{0.2cm}
    	\begin{enumerate}
    	\item Check, using row-reduction, if $B$ is hyper-regular. If not, return ``fail''.
    	\item Else, find $M\in GL_n\(\R\left[\Dg_\aaa^\order\right]\)$ such that 
$ MB = \begin{bmatrix}  I_m  \\ 0_{(n-m)\times m} \end{bmatrix}.$
	\item With $M$, obtain $R\in \R\left[\Dg_\aaa^\order\right]^{m\times n}$ and $\tilde{F}\in \R\left[\Dg_\aaa^\order\right]^{(n-m)\times n}$, according to Lemma~\ref{lem:elimination}, by:
$ MA= \begin{bmatrix} R \\ \tilde{F}  \end{bmatrix}. $
	\item Apply Algorithm 1 to $\tilde{F}$, which returns $P_1$ and $Q_1$. 
	\item Set $ P = \(P_1, \, 0\)$ and $ Q=  \begin{bmatrix} Q_1  \\  RQ_1  \end{bmatrix}$. 
	\end{enumerate}
\end{description}
\end{alg}

To summarize, Algorithm~\ref{alg2} gives a fractionally $0$-flat output $y$ by $y=P_{1}x$. Moreover, $x=Q_{1}y$ and $u=RQ_{1}y$ identically satisfy $Ax=Bu$.

\section{A thermal bidimensional application}\label{sec:thermal_app}
\subsection{Heat equation}
    \begin{figure}[ht]
        \begin{center}
         \psfrag{w}[][r]{\scalebox{0.9}{\quad $\varphi(y,t)$}}
         \psfrag{L2}[][l]{\scalebox{0.9}{$y$}}
         \psfrag{L1}[][l]{\scalebox{0.9}{$x$}}
         \psfrag{wy0}[][l]{\scalebox{0.9}{ }}
          \psfrag{wyL2}[][l]{\scalebox{0.9}{ }}
           \psfrag{wxL1}[][l]{\scalebox{0.9}{ }}
 \psfrag{T}[][l]{\scalebox{0.9}{$T(x_0,y_0,t)$}}
            \includegraphics[scale=0.45]{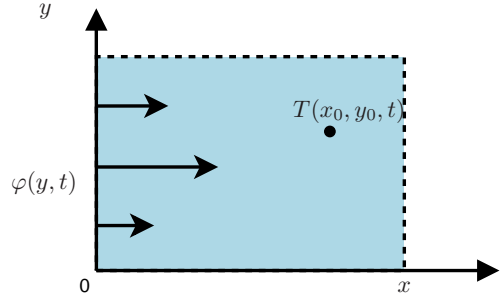}
            \caption{Heated metallic sheet}
            \label{fig:plaque_ch}
        \end{center}
    \end{figure}

We consider a 2D metallic sheet, assumed isolated and without heat losses (see \figref{fig:plaque_ch}), 
where the temperature $T(x,y,t)$ is controlled by the heat density flux $\varphi(y,t)$ across the $y$-axis for $y\geq 0$. The medium may be seen as a homogeneous metallic semi-infinite plane of diffusivity $\alpha$ and  conductivity $\lambda$. 
Our aim is to plan a temperature trajectory at a given point $(x_0,y_0)$ and to obtain the control that generates it. 

As it is well-known, the temperature diffusion satisfies the following scalar heat equation:
\begin{equation}\label{eq:heat_equation}
    \(\frac{\partial^2 }{\partial x^2}+ \frac{\partial^2}{\partial y^2}- \frac{1}{\alpha}\frac{\partial }{\partial t}\)T\(x,y,t\)=0, 
\end{equation}
in the open quarter-space $(x,y) \in ]0,\, \infty[\times ]0, \,\infty[$ and $t \in ]0,\, \infty[$, with boundary condition
\begin{equation}\label{eq:limit_conditions}
   -\lambda \frac{\partial T(x,y,t)}{\partial x}\big|_{x=0}=\varphi(y,t), \quad \forall y>0, \quad  \forall t>0
\end{equation}
$\varphi$ being the control variable, with the limit conditions
\begin{eqnarray}
\lim_{x\to \infty} T(x,y,t)=0, & \quad  \forall y>0, \quad \forall t>0, \label{eq:bound_cond}\\ 
\lim_{y\to \infty} T(x,y,t)=0, & \quad  \forall x>0, \quad \forall t>0, \label{eq:bound_cond2}
\end{eqnarray}
and Cauchy condition
\begin{equation}\label{eq:Cauchy}
    {T\left( {x,y,0} \right)}=0 \quad \forall x>0 , \quad \forall y>0.
\end{equation}

We follow a separation of variables approach developed, e.g.  in \cite{Cai05}
and generalizing the one followed by \cite{Lar98,Lar00,Sed01}. 
We recall that, for a function $f$ of several variables $\xi$ and $t\in \R$, its (partial) Laplace transform is given by
$\hat{f}(\xi,s) = \int_{0}^{+\infty} f(\xi,t) e^{-ts}dt$
for all $\xi$ and $s\in \Cc$ where this integral is finite.
Applying this transform to (\ref{eq:heat_equation}), we get:
\begin{equation}\label{eq:heat_equation_laplace}
     \frac{s}{\alpha}\hat{T}\(x,y,s\)=\frac{\partial^2\hat{T}\(x,y,s\) }{\partial x^2}+ \frac{\partial^2 \hat{T}\(x,y,s\)}{\partial y^2}.
\end{equation}

It may be seen that the formal series:
\begin{equation}\label{eq:T_P_HE_SV}
     \hat T(x,y,s)=\sum_{i=0}^{+\infty}L_{x,i}(s) L_{y,i}(s) e^{-\sqrt{s}\left(\frac{x}{i+1}+y\sqrt{\frac{1}{\alpha}-\frac{1}{(i+1)^2}}\right)}
\end{equation}
where $L_{x,i}(s)$ and $L_{y,i}(s)$ are arbitrary functions of the complex variable $s$,
satisfies (\ref{eq:heat_equation})
--(\ref{eq:Cauchy}) and that the heat flux, defined by (\ref{eq:limit_conditions}),  is given by 
 \begin{equation}\label{eq:fluxys}
   \hat \varphi(y,s)=\sum_{i=0}^{+\infty}\lambda\frac{\sqrt {s}}{i+1} L_{x,i}(s) L_{y,i}(s) e^{-\sqrt{s}y\sqrt{\frac{1}{\alpha}-\frac{1}{(i+1)^2}}}.
   \end{equation}
Setting 
\begin{equation}\label{eq:fluxys_basis} 
\hat\varphi_i (s) \triangleq \frac{\lambda\sqrt {s}}{i+1} L_{x,i}(s) L_{y,i}(s) , \quad \forall i,
\end{equation}
the unique solution of (\ref{eq:heat_equation_laplace}) for every given square integrable flux density $\hat\varphi$ reads:
\begin{equation}\label{eq:heat-sol}
\begin{aligned}
 \hat T(x,y,s)&=\sum_{i=0}^{+\infty} \frac{i+1}{\lambda\sqrt {s}}e^{-\sqrt{s}\left(\frac{x}{i+1}+y\sqrt{\frac{1}{\alpha}-\frac{1}{(i+1)^2}}\right)}\hat\varphi_i (s)\\
 \hat \varphi(y,s)&=\sum_{i=0}^{+\infty} e^{-\sqrt{s}\left(y\sqrt{\frac{1}{\alpha}-\frac{1}{(i+1)^2}}\right)}\hat\varphi_i (s).
 \end{aligned}
\end{equation}

 \begin{rem}
The sequence $\{\hat\varphi_i\}$ corresponds to a decomposition of the heat flow in spatial frequencies.
\end{rem}

For each $i\geq 0$, let us define the \emph{thermal impedance}:

\begin{equation}\label{eq:TF_2D}
    {H}_i(x,y,s)\triangleq \frac{(i+1) e^{-\sqrt{s}\left(\frac{x}{i+1}+y\sqrt{\frac{1}{\alpha}-\frac{1}{(i+1)^2}}\right)}}{\lambda\sqrt{s}},
\end{equation}
so that, denoting by $\hat{T}_{i}(x,y,s)= {H}_i(x,y,s)\hat\varphi_i$, we get
\begin{equation}\label{eq:TF_T}
 \hat{T}(x,y,s)=\sum_{i=0}^{+\infty}\hat{T}_{i}(x,y,s).
\end{equation}

\subsection{Approximate fractional heat transfer}
Using the Pad\'{e} approximant of $e^{-x}$ at the order $\mathbf{K}$ (see e.g. \cite{Bak96}), the expression (\ref{eq:TF_2D}) evaluated at the point $(x_{0},y_{0})$, 
with $\order = \frac{1}{2}$, reads:
\begin{equation}
\begin{aligned}
{H}_i (&x_{0},y_{0},s)= \frac{\frac{(i+1)}{\lambda} }{\sqrt{s}}e^{-\sqrt{s}\left(\frac{x_{0}}{i+1} + y_{0}\sqrt{\frac{1}{\alpha}-\frac{1}{(i+1)^2}}\right)}
\\    &\approx 
\frac{\sum_{k=0}^{\mathbf{K}} \frac{(i+1)}{\lambda} a_{i,k} s^{k\order}}{ \sum_{k=0}^{\mathbf{K}} |a_{i,k}|s^{(k+1)\order}} 
\triangleq  {H}_{i,\mathbf{K}}(x_{0},y_{0},s), \label{eq:TF_2D_NE2} 
\end{aligned} 
\end{equation}
with $a_{i,k}=\frac{(-1)^k(2\mathbf{K}-k)!\mathbf{K}!}{2\mathbf{K}!k! (\mathbf{K}-k)!} \left(\frac{x_{0}}{i+1}+y_{0}\sqrt{\frac{1}{\alpha}-\frac{1}{(i+1)^2}}\right)^k $.

For a point $(x_0,y_0)$ not too far from the origin, for every $i\geq 0$, ${H}_{i,\mathbf{K}}$
 fastly converges  to ${H}_i$ as $\mathbf{K}$ tends to infinity. 
Also, given a finite number $\mathbf{I}$, the truncated temperature 
\begin{equation}\label{eq:TF_T_IK}
\hat{T}_{\mathbf{I},\mathbf{K}}(x_0,y_0,s) \triangleq  \sum_{i=0}^{\mathbf{I}}{H}_{i,\mathbf{K}}(x_0,y_0,s) \hat\varphi_{i}(s)
\end{equation}
may be seen to  fastly converge to $\hat{T}(x_0,y_0,s)$ as $\mathbf{I}$ tends to infinity. 
Note that the finite sequence $\{ \varphi_{i}, i= 0,\ldots, \mathbf{I} \}$ now plays the role of the control vector. 

In state space form, the transfer (\ref{eq:TF_T_IK}) 
can be represented by:
\begin{equation}\label{sys:therm_app} 
AX = BU, \quad T_{\mathbf{I},\mathbf{K}}(x_0,y_0,t) = CX, 
\end{equation}
where: 
  \begin{equation}\label{eq:XU}
  \begin{array}{c}
       X \triangleq \begin{bmatrix} X_0\\ \vdots \\ X_{\mathbf{I}} \end{bmatrix},~
       U \triangleq \begin{bmatrix} \varphi_0\\ \vdots \\ \varphi_{\mathbf{I}} \end{bmatrix} ,\\
       A \triangleq \diag{A_i} ,  B\triangleq\diag{B_i },  C \triangleq  \begin{bmatrix} C_0, \ldots, C_{\mathbf{I}}\end{bmatrix}
       \end{array}
\end{equation}   
with, for $i=0,\ldots, \mathbf{I}$,     
 \begin{equation}\label{eq:A}
 X_i  \triangleq \begin{bmatrix} X_{i,\mathbf{K}}\\ \vdots\\ X_{i,0}  \end{bmatrix},
\end{equation}
 \begin{equation*}
  A_i  \triangleq \left[ {\begin{array}{*{20}c}
                \Dg_\aaa^\frac{1}{2} + | a'_{i,\mathbf{K}-1}|  & | a'_{i,\mathbf{K}-2}| &       \ldots      &   |a'_{i,0} |  &    0    \\
                   -1  &  \Dg_\aaa^\frac{1}{2}        & 0 & \ldots &   0\\
                    0 &   -1  & \ddots  &   \ddots& \vdots\\
                    \vdots & \ddots & \ddots & \ddots & 0 \\
                    0 &  \ldots &  0 &   -1 & \Dg_\aaa^\frac{1}{2}
             \end{array}} \right],
\end{equation*}
\begin{equation}\label{eq:BC}
       B_i  \triangleq 
       \left[  \begin{array}{c} 1  \\  0 _{\mathbf{K}\times 1} \end{array} \right]
       ,
        C_i \triangleq  
        \frac{(i+1)}{\lambda} \left[ \begin{array}{ccc} a'_{i,\mathbf{K}}& ,\cdots,&a'_{i,0} \end{array} \right],
\end{equation}
 where we have denoted $a'_{i,k}= a_{i,k}/|a_{i,\mathbf{K}}|$ for all $k=0, \ldots, {\mathbf{K}}$ and all $i=0,\ldots, \mathbf{I}$.
        
\subsection{Flat output computation}
Let us now apply Algorithm~2. Since every hyper-regular $B_i$ is already in its Smith form, we introduce the matrices $ R_i\in \R\left[\Dg_\aaa^\frac{1}{2}\right]^{1\times (\mathbf{K}+1)}$ and $\tilde{F_i}\in \R\left[\Dg_\aaa^\frac{1}{2}\right]^{\mathbf{K}\times (\mathbf{K}+1)}$ such that:
$ A_i =\left[ \begin{matrix}   R_i  \\
                 \tilde{F_i}\end{matrix} \right] $
     with            
     $$ R_i = \left[ \Dg_\aaa^\frac{1}{2}+| a'_{i,\mathbf{K}-1}| , \, | a'_{i,\mathbf{K}-2}| ,     \,  \ldots      ,  \,  |a'_{i,0} |  ,\,    0       \right],$$
$$ \tilde{F_i} =\left[ {\begin{array}{*{20}c}
                   -1  &  \Dg_\aaa^\frac{1}{2}        &  &    0\\
                     &   \ddots  &  \ddots &  \\
                    0 &   &   -1 & \Dg_\aaa^\frac{1}{2}
             \end{array}} \right], \quad i=0,\ldots, \mathbf{I}.
$$

To get the required implicit form, we consider the unimodular matrix $ M=\begin{bmatrix}\diag{M_{R,i},i=0,\ldots, \mathbf{I}}\\ \diag{M_{\tilde{F},i},i=0,\ldots, \mathbf{I}}\end{bmatrix}$
with $M_{R,i}=[1,0_{1, \mathbf{K}}]$ and $M_{\tilde{F},i}=[0_{\mathbf{K},1}, I_{\mathbf{K}}]$ for all $i=0,\ldots, \mathbf{I}$. and we can verify that $ MA=\begin{bmatrix}\diag{R_i,i=0,\ldots, \mathbf{I}}\\\diag{\tilde{F}_i,i=0,\ldots, \mathbf{I}}\end{bmatrix}\triangleq \begin{bmatrix}R\\\tilde{F} \end{bmatrix}$, $\tilde{F}$ being hyper-regular, and thus System (\ref{sys:therm_app})-(\ref{eq:XU})-(\ref{eq:A})-(\ref{eq:BC}) is $0$-flat.

Applying Algorithm 1 to $\tilde{F}$, we compute the upper triangular matrix $W\in GL_{(\mathbf{I}+1)(\mathbf{K}+1)}\(\R\left[\Dg_\aaa^\order\right]\)
$ satisfying
$\tilde{F} W = \begin{bmatrix} I_{\mathbf{K}(\mathbf{I}+1)}& 0_{\mathbf{K}(\mathbf{I}+1)\times (\mathbf{I}+1)} \end{bmatrix}, $
$$ W = \diag{W_i,i=0,\ldots, \mathbf{I}}, \text{with }$$ 
$$W_i\triangleq\left[ {\begin{array}{*{20}c}
                   -1  &  -\Dg_\aaa^\frac{1}{2}  &     -\Dg_\aaa^{1}   & \ldots &    -\Dg_\aaa^{\frac{K}{2}}\\
                     &   \ddots  & \ddots& \ddots & \vdots \\
                     &   &  \ddots &\ddots &  -\Dg_\aaa^{1}  \\
                     &   &   &  \ddots &  -\Dg_\aaa^\frac{1}{2}\\
                    0 &  &  &  & -1
             \end{array}} \right],\,  i=0,\ldots,\mathbf{I}.$$
and its inverse, also upper triangular,
\[ W ^{-1}= \diag{W_i^{-1},i=0,\ldots, \mathbf{I}}, \text{with}\]
\[ W_i^{-1}\triangleq
\left[ {\begin{array}{*{20}c}
                   -1  &  \Dg_\aaa^\frac{1}{2} &    &  0\\
                     &   \ddots  & \ddots& \\
                     &     &  \ddots &  \Dg_\aaa^\frac{1}{2}  \\
                    0 &   &  & -1 
             \end{array}} \right], \quad i=0,\ldots,\mathbf{I}.\]

Therefore, by introducing $S_i=\begin{bmatrix} 0 _{\mathbf{K}\times 1}\\ 1 \end{bmatrix} $, $Q_{1,i}=\begin{bmatrix}  -\Dg_\aaa^{\frac{\mathbf{K}}{2}}\\  \vdots\\ -\Dg_\aaa^\frac{1}{2}\\ - 1 \end{bmatrix}$ and $P_{1,i} = \begin{bmatrix} 0_{1\times \mathbf{K}}& 1\end{bmatrix} $, for $i=0,\ldots, \mathbf{I}$:
$$ \begin{aligned} Q_1 &= W \diag{S_i} =  
\diag{Q_{1,i},i=0,\ldots, \mathbf{I}} ,\\
P_1  &=  \diag{P_{1,i}}W^{-1} = \diag{-P_{1,i},i=0,\ldots, \mathbf{I}},
\end{aligned}$$
we indeed have $P_1Q_1 = I_{\mathbf{I}+1}$.
Finally, the defining matrices $P$ and $Q$ read:
$$
 P =  \diag{[-P_{1,i},0],i=0,\ldots, \mathbf{I}},\quad 
Q  =  \begin{bmatrix} Q_1  \\ RQ_1 \end{bmatrix} $$
which proves that System (\ref{sys:therm_app}) is fractionally $0$-flat and that a fractional flat output $Y$ is given by
\begin{equation*}
 Y=P_1 X=-\begin{bmatrix}X_{0,0}\\ \vdots\\ X_{\mathbf{I},0}\end{bmatrix},
 \end{equation*}
 \begin{equation*}
U = RQ_1 Y = \begin{bmatrix} - {\sum\limits_{k = 0}^{\mathbf{K}} {\left| {a'_{i,k} } \right| \Dg_\aaa^{\frac{k+1}{2}} Y_0} } \\ \vdots\\ - {\sum\limits_{k = 0}^{\mathbf{K}} {\left| {a'_{i,k} } \right| \Dg_\aaa^{\frac{k+1}{2}} Y_I} }\ \end{bmatrix},
\end{equation*}
\begin{align}\label{eq:flatout-thermex}
 T_{\mathbf{I},\mathbf{K}}(x_0,y_0,t)&= CX= CQ_1 Y\\
\nonumber  &=-\sum_{i=0}^{\mathbf{I}} \frac{i+1}{\lambda}\sum_{k=0}^{\mathbf{K}}{a'_{i,k} \Dg_\aaa^{\frac{k}{2}}Y_{i}}.
\end{align}
In particular $Y_i=-X_{i,0}$, $i=0,\ldots, \mathbf{I}$.

\begin{rem}
The original infinite dimensional system (\ref{eq:heat_equation})-(\ref{eq:Cauchy}), once expressed in the frequency domain, may be seen to be flat with the infinite sequence 
$Y(s)\triangleq (L_{x,1}(s)L_{y,1}(s), \ldots, L_{x,i}(s)L_{y,i}(s), \ldots)$
as flat output, since $\hat{T}(x_0,y_0, s)$ and $\hat{\varphi}(y,s)$ may be expressed in terms of $Y(s)$  (see (\ref{eq:T_P_HE_SV})-(\ref{eq:fluxys_basis})). The interpretation of these expressions in the time domain is far from being obvious due to the sum of convolutions appearing in the inverse Laplace transform of the products  $L_{x,i}(s)L_{y,i}(s)e^{-\sqrt{s}}$.
Accordingly, the $0$-flat output (\ref{eq:flatout-thermex}) has no clear physical interpretation
in terms of the heat equation original variables. 
\end{rem}

\begin{rem}
For the infinite dimensional system
(\ref{eq:heat_equation})-(\ref{eq:Cauchy}), 
the temperature and the heat flux must be chosen in the class of Gevrey functions in order to gua\-rantee the convergence of the series (\ref{eq:fluxys_basis})-(\ref{eq:heat-sol}) (see e.g. \cite{Lar98,Lar00,Rud03,Sed01}).
 Dealing with fractional operators provides an important simplification in the design, though convergence aspects may be confirmed numerically. It can be verified numerically that the series $\hat{T}_{\mathbf{I},\mathbf{K}}$ is fastly convergent. Therefore, the orders of truncation ${\mathbf{I}}$ and ${\mathbf{K}}$ may be chosen small.
\end{rem}

\subsection{Trajectory planning}
Let us define a rest-to-rest temperature trajectory at the point $x_0=0.045$m and $y_0=0.02$m (see \figref{fig:plaque_ch}) with a temperature rise of $30^\circ C$ over the ambient temperature in a total duration $t_f$, with the conditions:
\begin{align}\label{eq:condtions}
 T(x_0,y_0,0)&\triangleq T_0 =0, \\ 
\nonumber T^{(l)}(x_0,y_0,0)&=0, \quad l=1,\ldots,\mathbf{L},\\
\nonumber T(x_0,y_0,t_f)&\triangleq T_f =30, \\ 
\nonumber T^{(l)}(x_0,y_0,t_f)&=0,\quad l=1,\ldots,\mathbf{L},
 \end{align} 
with $\mathbf{L}$  to be determined later.

The desired temperature $T_{\mathbf{I,K}}(x_0,y_0,t)$ and the control $U$ will be deduced from (\ref{eq:flatout-thermex}), \emph{without need to integrate the system equation (\ref{sys:therm_app})},
by translating the conditions (\ref{eq:condtions}) into conditions on the fractionally flat outputs $Y_i$, $i=0,\ldots,\mathbf{I}$, assuming that their trajectory is given by a polynomial of $t$:
\begin{equation}\label{eq:flatinterpol}
Y_i(t) \triangleq \sum_{j=0}^{\mathbf{r}} \eta_{i,j}\left( \frac{t}{t_f} \right)^{j} 
\end{equation}
with the integer $\mathbf{r}$ and the real coefficients $\eta_{i,j}$, $i=0,\ldots,\mathbf{I}$, $j=0,\ldots, \mathbf{r}$, to be determined. 

We compute the initial and final values $Y_i(0)$ and $Y_i(t_f)$ using (\ref{eq:flatout-thermex}) 
and knowing that, for $ t\geq 0$:
$$\Dg_\aaa^{\frac{k}{2}}t^j = \frac{\Gamma(j+1)}{\Gamma(j+1-\frac{k}{2})} t^{j-\frac{k}{2}},\, k=0, \ldots, \mathbf{K}, \, j=0,\ldots, \mathbf{r}.$$
We also assume that we can identify $T(x_0,y_0,t)$ with $T_{\mathbf{I},\mathbf{K}}(x_0,y_0,t) $.
Thus:
\begin{equation*}
\begin{array}{l} 
T(t) \triangleq T(x_0, y_0,t) \approx T_{\mathbf{I},\mathbf{K}}(x_0,y_0,t)\\
= -\sum_{i=0}^{\mathbf{I}} \frac{i+1}{\lambda}
 \sum_{j=0}^{\mathbf{r}}
\frac{\eta_{i,j}}{t_{f}^{j}}
 \sum_{k=0}^{\mathbf{K}}
a'_{i,k} \frac{\Gamma(j+1)}{\Gamma(j+1-\frac{k}{2})} t^{j-\frac{k}{2}},\\
 T^{(l)}(t)\approx T_{\mathbf{I},\mathbf{K}}^{(l)}(x_0,y_0,t)\\
 =-\sum_{i=0}^{\mathbf{I}} \frac{i+1}{\lambda}
\sum_{j=0}^{\mathbf{r}}
 \frac{\eta_{i,j}}{t_{f}^{j}}\sum_{k=0}^{\mathbf{K}}
a'_{i,k} \frac{\Gamma(j+1)}{\Gamma(j+1-\frac{k}{2}-l)} t^{j-\frac{k}{2}-l}.
\end{array}
\end{equation*}

In order to satisfy the initial conditions ($t=0$),
the coefficients $\eta_{i,j}$, corresponding to the negative exponents ($j-\frac{k}{2}-l<0$), must vanish. Therefore $\eta_{i,j} = 0$ for all $j < \frac{\mathbf{K}}{2}+\mathbf{L}$ and $T$ reads:
{\small$$T(t)= -\sum_{i=0}^{\mathbf{I}}\frac{i+1}{\lambda}
 \sum_{j=\lceil \frac{\mathbf{K}}{2}+\mathbf{L} \rceil}^{\mathbf{r}}
 \frac{\eta_{i,j}}{t_{f}^{j}} \sum_{k=0}^{\mathbf{K}}
a'_{i,k} \frac{\Gamma(j+1)}{\Gamma(j+1-\frac{k}{2})} t^{j-\frac{k}{2}},
$$}
where $\lceil. \rceil$ is the ceiling operator. 
Since, there are $\mathbf{L}+1$ final conditions left for $T^{(l)}(t_f)$, $l=0,\ldots,\mathbf{L}$, $\mathbf{r}$ must satisfy  $\left(\mathbf{r}-\lceil  \frac{\mathbf{K}}{2}+\mathbf{L}\rceil +1\right)(\mathbf{I}+1)\geq 2(\mathbf{L}+1)$, thus
 \begin{equation}\label{eq:rbound}
  \mathbf{r} \geq 2\frac{\mathbf{L}+1}{\mathbf{I}+1}+ \left\lceil \frac{\mathbf{K}}{2}+\mathbf{L} \right\rceil -1.
 \end{equation}

 \begin{figure*}[ht]
  \begin{center}
        \psfrag{t}[][l]{\scalebox{0.9}{time (s)}}
        \psfrag{z}[][l]{\scalebox{0.9}{}}
        \psfrag{flat}[][l]{\scalebox{0.9}{Flat outputs $Y_i$}}
        \psfrag{data1}[][l]{\scalebox{0.75}{\quad\quad $Y_0$}}
        \psfrag{data2}[][l]{\scalebox{0.75}{\quad\quad$Y_1$}}
            \includegraphics[scale=0.59]{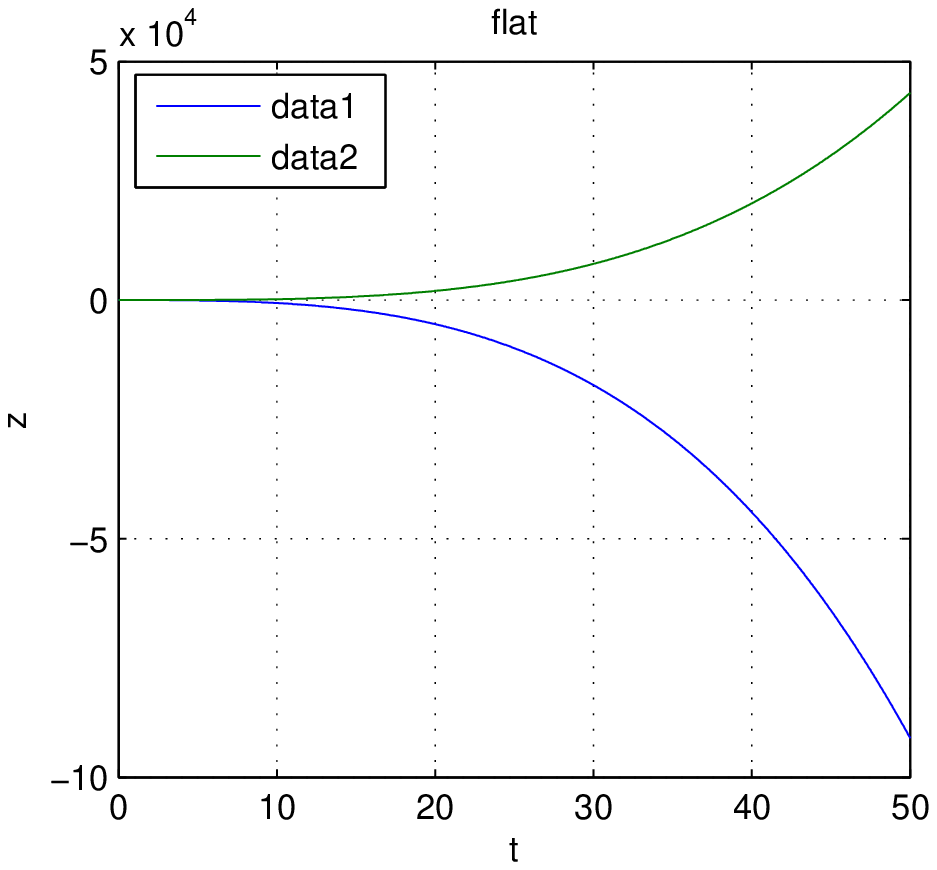}            
\hspace{-0.5cm}
        \psfrag{t}[][l]{\scalebox{0.9}{time (s)}}
        \psfrag{u}[][l]{\scalebox{0.9}{u}}
        \psfrag{flux}[][l]{\scalebox{0.9}{controls $\varphi_i$}}
        \psfrag{data2}[][l]{\scalebox{0.75}{\quad\quad $\varphi_0$}}
        \psfrag{data3}[][l]{\scalebox{0.75}{\quad\quad$\varphi_1$}}
        \psfrag{data1}[][l]{\scalebox{0.75}{\quad\quad$\varphi$}}
            \includegraphics[scale=0.59]{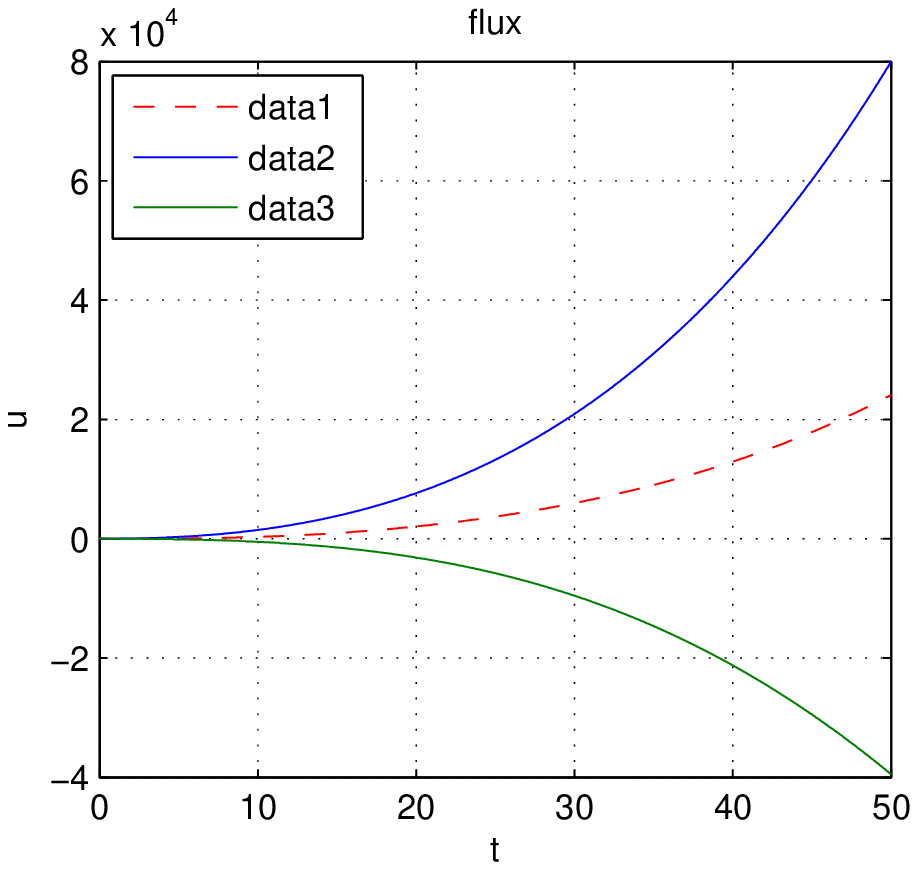}
\hspace{-0.5cm}      
               \psfrag{data1}[][l]{\scalebox{0.75 }{ $T$}}
        \psfrag{data2}[][l]{\scalebox{0.75}{\quad\quad $T_{exact}$}}
        \psfrag{t}[][l]{\scalebox{0.9}{time (s)}}
        \psfrag{T}[][l]{\scalebox{0.9}{$T(x_0,y_0,t)$ (°C)}}
        \psfrag{temp}[][l]{\scalebox{0.9}{Temperature}}
            \includegraphics[scale=0.59]{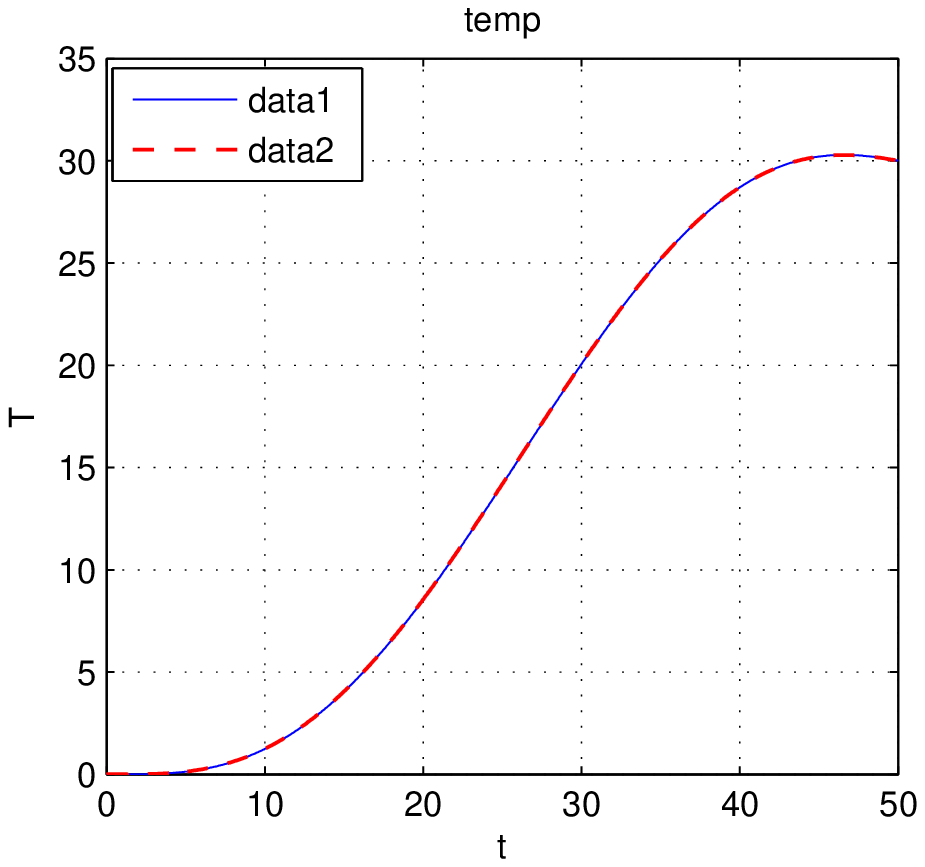}
            \caption{Left: Flat outputs $Y_i, \quad i=0,1$. Center: Decomposition of the heat flow (control) $\varphi_i(t), \quad i=0,1$ with the heat density flux (dashed red). Right: Temperature trajectory.}
            \label{fig:sim}
    \end{center}
\end{figure*}

We therefore have to solve the linear system in the coefficients $\eta_{i,j}$, $j=\lceil \frac{\mathbf{K}}{2}+\mathbf{L} \rceil, \ldots \mathbf{r}$, $i=0,\ldots,\mathbf{I}$:
 \begin{equation}\label{eq:T_l_tf}
T^{(l)}(t_f)=-\sum_{i=0}^{\mathbf{I}} \frac{i+1}{\lambda}
\sum_{j=\lceil \frac{\mathbf{K}}{2}+\mathbf{L} \rceil}^{\mathbf{r}}
\frac{\eta_{i,j}}{t_{f}^{j}}\sum_{k=0}^{\mathbf{K}}
a'_{i,k} \frac{\Gamma(j+1)}{\Gamma(j+1-\frac{k}{2}-l)} t_f^{j-\frac{k}{2}-l}
=0,
 \end{equation} 
 \begin{equation}\label{eq:T_0_tf}
 T(t_f)=-\sum_{i=0}^{\mathbf{I}}\frac{i+1}{\lambda}
 \sum_{j=\lceil \frac{\mathbf{K}}{2}+\mathbf{L} \rceil}^{\mathbf{r}}
 \frac{\eta_{i,j}}{t_{f}^{j}} \sum_{k=0}^{\mathbf{K}}
a'_{i,k} \frac{\Gamma(j+1)}{\Gamma(j+1-\frac{k}{2})} t_f^{j-\frac{k}{2}}
=T_f,
 \end{equation}
for all $l=1,\ldots,\mathbf{L}$, and plug its solution in (\ref{eq:flatinterpol}).

 \subsection{Simulations}
In the simulations presented in Figure \ref{fig:sim}, $\alpha= 8.83\times\, 10^{-5}m^2.s^{-1}$, $\lambda = 210 W.m^{-1}.K^{-1}$ and $t_f=50s$. We have chosen $\aaa =0$,  $\mathbf{K}=\mathbf{L} =2$,  $\mathbf{I}=1$ and $\mathbf{r}=5$. 
The coefficients $a'_{i,k} = \frac{a_{i,k}}{| a_{i,3} |}$ with $a_{i,k}=\frac{(-1)^k(2\mathbf{K}-k)!\mathbf{K}!}{2\mathbf{K}!k! (\mathbf{K}-k)!} \left(\frac{x_{0}}{i+1}+y_{0}\sqrt{\frac{1}{\alpha}-\frac{1}{(i+1)^2}}\right)^k $ in System (\ref{eq:T_l_tf})-(\ref{eq:T_0_tf}) are thus:
\[
\begin{array}{llll}
	a'_{0,1}= 2.543 &~~ a'_{0,2}= -2.762  &~~ a'_{0,3}= 1 \\
	a'_{1,1}= 2.596 &~~ a'_{1,2}= -2.791 &~~ a'_{1,3}= 1 \\
\end{array}
\]

The obtained coefficients $\eta_{i,j}$, $i=0,1$, $j=3,4,5$,  are:
\[
\begin{array}{llll}
	\eta_{0,3} =-7.30. 10^{4} & \eta_{0,4} =-1.13.10^{3}& \eta_{0,5} =-7.52.10^{3} \\
	\eta_{1,3} = 1.84.10^{4}& \eta_{1,4} =3.23.10^{4} & \eta_{1,5} =-7.33.10^{3}  \\
\end{array}
\]

The fractional flat output is depicted in Figure \ref{fig:sim} (left) and the corresponding heat flow (control) decomposition ($\varphi_i(t)$)  in Figure \ref{fig:sim} (center).
It can be noticed that the duration $t_f=50$s to reach the point of coordinates $x_0=0.045$m and $y_0=0.02$m is small compared to the time response of the system. However, the total flux density $\varphi$ of about $2.10^{4}W.m^{-2}$ (Figure \ref{fig:sim}, center, dashed line) needed to increase the temperature of $30^\circ C$, remains reasonable.
In Figure~\ref{fig:sim} (right), the exact solution of the heat equation, computed by applying the inverse Laplace transform to (\ref{eq:heat-sol}), is denoted by $T_{exact}$ and is plotted in dashed line. It is compared to the computed reference trajectory $T_{\mathbf{I},\mathbf{K}}$ (continuous line). Note that the error between $T_{\mathbf{I},\mathbf{K}}$ and $T_{exact}$ remains small, less than $0.02°C$, during the transient and converges to 0, which confirms the validity of our fractional approximation, even with such a rough truncation (recall that $\mathbf{K}=\mathbf{L} =2$ and $\mathbf{I}=1$). 
    
{\begin{rem}    
As it is well-known, the change of time scale $\tau= \alpha t$ results in normalizing the heat equation, i.e. $\frac{\partial^{2}T}{\partial x^{2}} + \frac{\partial^{2}T}{\partial y^{2}} - \frac{\partial T}{\partial \tau} = 0$.
Therefore, a small $\alpha$ will result in a slow temperature evolution compared to the spatial diffusion and conversely for large $\alpha$, with infinitely differentiable dependence. 
Moreover, if the diffusivity $\alpha$ is not precisely known, since steady states do not depend on it, as well as the flat output  $Y_i=-X_{i,0}$, $i=0,\ldots, \mathbf{I}$, as far as rest-to-rest trajectories are concerned, a small error on this coefficient might only mildly affect the state transient.
\end{rem}}

\section{Conclusions}\label{sec:conclu}
The notion of differential flatness has been extended to fractional linear  systems. A simple characterization of fractional flat outputs and fractional 0-flat outputs has been obtained in terms of polynomial matrices in the operator $\Dg_\aaa^{\order}$, leading to a simple algorithm to compute them. These results have been applied to a fractional approximation of order $\frac{1}{2}$ of the heat equation, correspon\-ding to a model of a heated 2-dimensional metallic sheet. The open-loop input flux density that gene\-rates the desired temperature profile at a given point of the metallic sheet has been deduced from the flat output trajectory, without integration of the system PDE.
Simulations have been presented, showing in particular that the obtained trajectory and the (exact) solution of the heat equation, computed by inverse Laplace transform, are very close, even with a rough truncation of the series expansion. It is remarkable that such a fractional model provides an important simplification in the trajectory design compared to the one based on the PDE model which requires using Gevrey functions to ensure the convergence of some series, though, with our approach, convergence aspects may be confirmed numerically.

For future works, feedback controllers for this model will be studied such as CRONE controllers, to improve the robustness of the trajectory tracking versus various perturbations.

\appendix
\section{Recalls on the properties of the fractional derivative}
\subsection{Fractional integrability}
Recall that $\nu\in [0,1[$, and $t\in[a,+\infty[$.
Let us set
\begin{equation}\label{eq:gfnctn}
 g_{t}^{\n}(\tau) = \frac{1}{\Gamma(\n + 1)} \left( t^{\n} - (t-\tau)^{\n}\right).
\end{equation}
The Stieltjes measure $dg_{t}^{\n}(\tau) \mathop  =  \limits^\Delta \frac{\partial g_{t}^{\n}}{\partial \tau} (\tau) d\tau =\frac{1}{\Gamma(\n)}$ $ \left( t-\tau \right)^{\n -1} d\tau$ is indeed absolutely continuous with res\-pect to the Lebesgue measure $d\tau$, and its density $\frac{\partial g_{t}^{\n}}{\partial \tau}$ belongs to $L^{1}(\aaa,t)$ for all $t \geq \aaa$ since
{$$
\begin{aligned}
\big\Vert \frac{\partial g_{t}^{\n}}{\partial \tau} \big\Vert_{L^{1}(\aaa,t)} &
= \int_{\aaa}^{t} \big| \frac{\partial g_{t}^{\n}}{\partial \tau}(\tau) \big| d\tau = g_{t}^{\n}(t) - g_{t}^{\n}(\aaa) 
\\
&
= \frac{ \vert t-\aaa \vert^{\n}}{\Gamma(\n + 1)} < +\infty.
\end{aligned}
$$}
With these notations, the fractional primitive $\Ig_{\aaa}^{\n}f(t)$ reads
\begin{equation}\label{eq:stieljes}
\Ig_{\aaa}^{\n}f(t) =  \int_{\aaa}^{t} f(\tau) dg_{t}^{\n}(\tau).
\end{equation}
Since
$\vert \Ig_{\aaa}^{\n}f(t) \vert \leq \big\Vert \frac{\partial g_{t}^{\n}}{\partial \tau} \big\Vert_{L^{1}(\aaa,t)} \cdot \Vert f \Vert_{L^{\infty}(\aaa,t)} < +\infty$,
we immediately deduce that the fractional primitive
is well-defined for all $f\in L^{\infty}(\aaa,t)$, and \emph{a fortiori} for all $f \in  \C^{\infty}([\aaa,+\infty[)$, all $\n \in [0,1[$ and all finite $t\geq \aaa$.
Moreover, the definition (\ref{eq:stieljes}) may be extended to any order $\n+k$ for an arbitrary integer $k$:
\begin{equation}\label{eq:stieljes_nk}
 \Ig_{\aaa}^{\n+k}f(t) =  \int_{\aaa}^{t} f(\tau) dg_{t}^{\n+k}(\tau).
\end{equation}

\subsection{Fractional differentiability }
(see \cite[Chap.2, Sec. 2.2.5, p.57]{Pod99a})

\begin{prop}\label{prop:DnInu_InuDn}
The operator $\Dg_{\aaa}^{\order}$ maps $\C^{\infty}([\aaa,+\infty[)$ to $\C^{\infty}(]\aaa,+\infty[)$  and 
\begin{equation}\label{eq:gamma-diff}
\Dg_{\aaa}^{\order}f(t) = \Ig_{\aaa}^{\n}f^{(n)}(t) + \sum_{j=0}^{n-1} \frac{(-1)^{j}(t-\aaa)^{\n -n + j}}{\Gamma(\n -n + j + 1)} \delta_{\aaa}^{(j)}f.
\end{equation}
with $k=n$ and $\gamma = n -\n$. 
Moreover, for all $f\in \mathfrak{H}_\aaa$ and $\order=n-\n$, we have
\begin{equation}\label{eq:gamma-diff1}
\Dg_{\aaa}^{\order}f(t) = \Ig_{\aaa}^{\n}\Dg^n f(t).
\end{equation}
\end{prop}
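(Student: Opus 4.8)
The plan is to exploit the factorization $\Dg_\aaa^\order = \Dg^n \Ig_\aaa^\n$ coming from the definition (\ref{eq:def_frac_int}), but \emph{not} by differentiating the fractional primitive directly: since the kernel $(t-\tau)^{\n-1}$ of $\Ig_\aaa^\n$ is singular at $\tau=t$ when $\n<1$, a naive Leibniz rule produces a divergent boundary term. Instead, I would first transfer derivatives from the kernel onto $f$ by integration by parts in the Stieltjes form (\ref{eq:stieljes}). Using $g_t^\n(t) = t^\n/\Gamma(\n+1)$ and $\partial_\tau g_t^\n(\tau) = (t-\tau)^{\n-1}/\Gamma(\n)$, the $t^\n$-proportional contributions — arising both from the evaluation at the bounds and from the constant-in-$\tau$ part $t^\n/\Gamma(\n+1)$ of $g_t^\n$ inside $\int_\aaa^t g_t^\n f'\,d\tau$ — cancel exactly; this cancellation is precisely what the normalization $t^\n$ in the definition (\ref{eq:gfnctn}) of $g_t^\n$ is designed to produce. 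One is left with the clean one-step reduction
\begin{equation*}
\Ig_\aaa^\n f(t) = \frac{(t-\aaa)^\n}{\Gamma(\n+1)} f(\aaa) + \Ig_\aaa^{\n+1} f'(t),
\end{equation*}
where $\Ig_\aaa^{\n+1}$ is the extended primitive (\ref{eq:stieljes_nk}).

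Iterating this identity $n$ times (a finite induction raising the order by one at each step) yields
\begin{equation*}
\Ig_\aaa^\n f(t) = \sum_{j=0}^{n-1} \frac{(t-\aaa)^{\n+j}}{\Gamma(\n+j+1)} f^{(j)}(\aaa) + \Ig_\aaa^{\n+n} f^{(n)}(t).
\end{equation*}
I would then apply $\Dg^n$ to both sides. On each boundary monomial the power rule gives $\Dg^n (t-\aaa)^{\n+j} = \frac{\Gamma(\n+j+1)}{\Gamma(\n+j-n+1)}(t-\aaa)^{\n+j-n}$, so the $j$-th term becomes $\frac{(t-\aaa)^{\n-n+j}}{\Gamma(\n-n+j+1)} f^{(j)}(\aaa)$. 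On the remaining term I would use the composition law $\Dg^n \Ig_\aaa^{\n+n} = \Ig_\aaa^\n$, which follows from the semigroup property $\Ig_\aaa^{\n+n} = \Ig_\aaa^n \circ \Ig_\aaa^\n$ together with the iterated fundamental theorem of calculus $\Dg^n \Ig_\aaa^n = \mathrm{id}$ on continuous integrands. This produces $\Ig_\aaa^\n f^{(n)}(t)$ and, recognizing $\delta_\aaa^{(j)} f = (-1)^j f^{(j)}(\aaa)$ (the action on $f$ of the $j$-th derivative of the Dirac mass at $\aaa$), reproduces exactly (\ref{eq:gamma-diff}).

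The smoothness claim then reads off this formula: each boundary term is a constant times $(t-\aaa)^{\n-n+j}$, which is $\C^\infty$ on $]\aaa,+\infty[$ (and genuinely singular at $t=\aaa$ when $j<\order$, which is why the target interval must be open), while $\Ig_\aaa^\n f^{(n)}$ is $\C^\infty$ on $]\aaa,+\infty[$: after the change of variables $s=t-\tau$ one has $\Ig_\aaa^\n g(t) = \frac{1}{\Gamma(\n)}\int_0^{t-\aaa} s^{\n-1} g(t-s)\,ds$ with $g=f^{(n)}$ smooth, and since $s^{\n-1}$ is integrable near $0$ one may differentiate under the integral sign arbitrarily often. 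Finally (\ref{eq:gamma-diff1}) is an immediate corollary: if $f\in\mathfrak{H}_\aaa$ as in (\ref{eq:Ha}), then $f\equiv 0$ on a left neighborhood of $\aaa$, hence $f^{(j)}(\aaa)=0$ for every $j$, all boundary terms vanish, and $\Dg_\aaa^\order f = \Ig_\aaa^\n f^{(n)} = \Ig_\aaa^\n \Dg^n f$.

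I expect the main obstacle to be the bookkeeping of the boundary contributions in the repeated integration by parts: at each step one must check that the upper limit $\tau=t$ contributes nothing (the power $(t-\tau)^{\n+k}$ vanishes there) while the lower limit $\tau=\aaa$ together with the cancellation of the $t^{\n+k}$ normalization terms leaves precisely the stated $f^{(j)}(\aaa)$ coefficients. The only non-elementary input is the composition law $\Dg^n \Ig_\aaa^{\n+n} = \Ig_\aaa^\n$; I would either invoke it as the standard semigroup property of Riemann--Liouville operators or, to keep the argument self-contained, verify directly that the $n$-fold ordinary integral hidden in $\Ig_\aaa^{\n+n}$ is undone by $\Dg^n$.
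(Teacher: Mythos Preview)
The paper does not actually supply a proof of this proposition: the appendix merely states it under the heading ``Fractional differentiability'' with the parenthetical citation ``(see \cite[Chap.~2, Sec.~2.2.5, p.~57]{Pod99a})'', and likewise for the commutativity proposition that follows. So there is no in-paper argument to compare yours against.

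That said, your derivation is correct and is essentially the standard one found in the cited reference: integrate by parts in $\Ig_\aaa^\n f$ to raise the order of the primitive by one while shedding a boundary term $\frac{(t-\aaa)^{\n+j}}{\Gamma(\n+j+1)}f^{(j)}(\aaa)$, iterate $n$ times, then apply $\Dg^n$ and use $\Dg^n\Ig_\aaa^{\n+n}=\Ig_\aaa^\n$. Your reading of $\delta_\aaa^{(j)}f$ as $(-1)^j f^{(j)}(\aaa)$ is the intended one, so the signs in (\ref{eq:gamma-diff}) match. The only cosmetic point is that your discussion of the ``$t^\n$ cancellation'' coming from the normalization in $g_t^\n$ is a slight over-complication: a direct integration by parts with $dv=(t-\tau)^{\n-1}d\tau$, $v=-\frac{(t-\tau)^\n}{\n}$ already gives the clean one-step identity without tracking that constant separately. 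Otherwise the smoothness argument and the specialization to $f\in\mathfrak{H}_\aaa$ are exactly as required.
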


\subsection{Commutativity  }
(see \cite[Chap.2, Sec. 2.2.6, p.59]{Pod99a})

\begin{prop}[Commutativity of $\Dg^\order_\aaa$]\label{prop:Dcom}
If $f\in \mathfrak{H}_\aaa$, we have
\begin{equation}\label{eq:Dmu_Dorder_commutativity2}
    \Dg^\kappa_\aaa \(\Dg^\order_\aaa f(t)\)=\Dg^\order_\aaa \(\Dg^\kappa_\aaa f(t)\)= \Dg_\aaa^{\order+\kappa} f(t).
\end{equation}
\end{prop}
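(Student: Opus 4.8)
The plan is to reduce every fractional derivative appearing in (\ref{eq:Dmu_Dorder_commutativity2}) to an ordinary derivative composed with a Riemann--Liouville \emph{integral}, by means of the representation (\ref{eq:gamma-diff1}) of Proposition~\ref{prop:DnInu_InuDn}, and then to exploit the good algebraic behaviour of ordinary derivatives and fractional integrals on $\mathfrak{H}_\aaa$. Write $\order=n-\n$ and $\kappa=p-\mu$ with $n,p\in\N$ and $\n,\mu\in[0,1[$ as in Section~\ref{sec:FC}. If either order is an integer, then the corresponding operator is the ordinary derivative and the statement reduces to the classical commutativity and additivity of integer derivatives on $\mathfrak{H}_\aaa$; so I may assume $\n,\mu\in ]0,1[$. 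By (\ref{eq:gamma-diff1}), $\Dg_\aaa^\order h=\Ig_\aaa^\n\Dg^n h$ for every $h\in\mathfrak{H}_\aaa$, and $\mathfrak{H}_\aaa$ is stable under each such operator (the function $\Ig_\aaa^\mu\Dg^p f$ is smooth on $]\aaa,+\infty[$ and vanishes for $t\le\aaa$ whenever $f$ does), so I may apply the representation repeatedly.

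First I would record three elementary facts on $\mathfrak{H}_\aaa$. \emph{(i) Semigroup law} $\Ig_\aaa^\mu\Ig_\aaa^\n=\Ig_\aaa^{\mu+\n}$, obtained from the Stieltjes form (\ref{eq:stieljes}) by a Fubini interchange and the Beta-function identity $\int_\sigma^t (t-\tau)^{\mu-1}(\tau-\sigma)^{\n-1}\,d\tau=\frac{\Gamma(\mu)\Gamma(\n)}{\Gamma(\mu+\n)}(t-\sigma)^{\mu+\n-1}$. \emph{(ii) Commutation of $\Dg^k$ with $\Ig_\aaa^\mu$}: equating the two expressions (\ref{eq:def_frac_int}) and (\ref{eq:gamma-diff1}) for the order-$\mu$ derivative of $h\in\mathfrak{H}_\aaa$ gives $\Dg^1\Ig_\aaa^\mu h=\Ig_\aaa^\mu\Dg^1 h$; iterating this identity on $\Dg^1 h,\Dg^2 h,\dots\in\mathfrak{H}_\aaa$ yields $\Dg^k\Ig_\aaa^\mu h=\Ig_\aaa^\mu\Dg^k h$ for all $k\in\N$. \emph{(iii) Left inverse} $\Ig_\aaa^1\Dg^1 h=h$ for $h\in\mathfrak{H}_\aaa$, which is the fundamental theorem of calculus together with $h(\aaa)=0$.

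With these in hand I would compute, for $f\in\mathfrak{H}_\aaa$,
\[
\Dg_\aaa^\order(\Dg_\aaa^\kappa f)=\Ig_\aaa^\n\Dg^n(\Ig_\aaa^\mu\Dg^p f)=\Ig_\aaa^\n\Ig_\aaa^\mu\Dg^n\Dg^p f=\Ig_\aaa^{\n+\mu}\Dg^{n+p}f,
\]
where the second equality uses (ii) with $h=\Dg^p f$, and the third uses (i) together with $\Dg^n\Dg^p=\Dg^{n+p}$. The right-hand side is symmetric under the exchange $(\n,n)\leftrightarrow(\mu,p)$, hence equals $\Dg_\aaa^\kappa(\Dg_\aaa^\order f)$; this establishes the first equality of (\ref{eq:Dmu_Dorder_commutativity2}).

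It remains to identify $\Ig_\aaa^{\n+\mu}\Dg^{n+p}f$ with $\Dg_\aaa^{\order+\kappa}f$. Since $\order+\kappa=(n+p)-(\n+\mu)$ with $\n+\mu\in ]0,2[$, I split into two cases. If $\n+\mu<1$, then $n+p=\lceil\order+\kappa\rceil$ and (\ref{eq:gamma-diff1}) gives directly $\Dg_\aaa^{\order+\kappa}f=\Ig_\aaa^{\n+\mu}\Dg^{n+p}f$. If $\n+\mu\ge1$, write $\n+\mu=1+\rho$ with $\rho\in[0,1[$; then $\lceil\order+\kappa\rceil=n+p-1$, so (\ref{eq:gamma-diff1}) gives $\Dg_\aaa^{\order+\kappa}f=\Ig_\aaa^{\rho}\Dg^{n+p-1}f$, whereas (i) and (iii) give $\Ig_\aaa^{\n+\mu}\Dg^{n+p}f=\Ig_\aaa^{\rho}\Ig_\aaa^1\Dg^{n+p}f=\Ig_\aaa^{\rho}\Dg^{n+p-1}f$, the same quantity (the integer subcase $\rho=0$ being subsumed via $\Ig_\aaa^0=\mathrm{id}$). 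This proves the second equality. The only non-routine points, and hence where I expect the main obstacle, are the Fubini interchange justifying the semigroup law (i) and, above all, the correct handling of the ceiling jump when $\n+\mu\ge1$: one must absorb exactly one order of fractional integration into an ordinary derivative via (iii) so that the combined representation matches the canonical definition of $\Dg_\aaa^{\order+\kappa}$. Everything else is bookkeeping on $\mathfrak{H}_\aaa$.
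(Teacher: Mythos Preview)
The paper does not give its own proof of this proposition: the heading of the subsection simply reads ``(see \cite[Chap.~2, Sec.~2.2.6, p.~59]{Pod99a})'' and no argument is supplied. Your self-contained derivation is correct and follows the classical route --- reduce everything via (\ref{eq:gamma-diff1}) to the three building blocks (semigroup law for $\Ig_\aaa$, commutation $\Dg^k\Ig_\aaa^\alpha=\Ig_\aaa^\alpha\Dg^k$ on $\mathfrak{H}_\aaa$, and $\Ig_\aaa^1\Dg^1=\mathrm{id}$ on $\mathfrak{H}_\aaa$), then handle the ceiling jump when $\nu+\mu\geq1$.

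Two small points are worth tightening. First, your parenthetical that $\Ig_\aaa^\mu\Dg^p f$ is ``smooth on $]\aaa,+\infty[$'' understates what you need and what you actually have: once (ii) is in hand, $\Dg^k(\Ig_\aaa^\mu g)=\Ig_\aaa^\mu\Dg^k g$ is continuous on $[\aaa,+\infty[$ and vanishes at $\aaa$ for every $k$, so $\Ig_\aaa^\mu g\in\mathfrak{H}_\aaa$, which is precisely the hypothesis required for the second application of (\ref{eq:gamma-diff1}). Second, in your boundary subcase $\rho=0$ the paper's convention $\lceil x\rceil=E(x)+1$ gives $\lceil\order+\kappa\rceil=n+p$, not $n+p-1$; but since $\order+\kappa$ is then an integer and $\Dg_\aaa^{\order+\kappa}$ reduces to the ordinary derivative $\Dg^{n+p-1}$, your identification via $\Ig_\aaa^1\Dg^{n+p}f=\Dg^{n+p-1}f$ still lands on the right object. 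Neither issue affects the validity of the argument.
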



\begin{thebibliography}{10}

\bibitem{Ant14}
F.~Antritter, F.~Cazaurang, J.~L\'{e}vine, and J.~Middeke.
\newblock On the computation of $\pi$-flat outputs for differential-delay
  linear systems.
\newblock {\em Systems {\&} Control Letters}, 71:14--22, 2014.

\bibitem{Ant11}
F.~Antritter and J.~Middeke.
\newblock A toolbox for the analysis of linear systems with delays.
\newblock In {\em CDC/ECC 2011}, pages 1950--1955, Orlando, USA, 2011.

\bibitem{Bak96}
G.A. Baker and P.R. Graves-Morris.
\newblock {\em Pad\'{e} Approximants}.
\newblock Cambridge University Press, 1996.

\bibitem{Bat00}
J.-L. Battaglia, L.~Le~Lay, J.-C. Batsale, A.~Oustaloup, and O.~Cois.
\newblock Heat flux estimation through inverted non integer identification
  models.
\newblock {\em International Journal of Thermal Science}, 39(3):374--389, 2000.

\bibitem{Bat75}
A.~Battig and S.L. Kalla.
\newblock On {M}ikusinski's operators of fractional integration.
\newblock {\em Revista Colombiana de Matem\'{a}ticas}, IX:155--160, 1975.

\bibitem{Bou70}
N.~Bourbaki.
\newblock {\em Alg{\`e}bres, {\'e}l{\'e}ments de Math{\'e}matiques}.
\newblock Hermann, Paris, 1970.

\bibitem{Bru70}
P.~Brunovsk\'{y}.
\newblock A classification of linear controllable systems.
\newblock {\em Kybernetika}, 6:176--178, 1970.

\bibitem{Cai05}
G.~Cain and G.~H. Meyer.
\newblock {\em Separation of Variables for Partial Differential Equations: an
  Eigenfunction Approach}.
\newblock Chapman \& Hall/CRC, 2005.

\bibitem{Chy05}
F.~Chyzak, A.~Quadrat, and D.~Robertz.
\newblock Effective algorithms for parametrizing linear control systems over
  \textmd{O}re algebras.
\newblock {\em Appl. Algebra Eng., Commun. Comput.}, 16(5):319--376, 2005.

\bibitem{Coh85}
P.~M. Cohn.
\newblock {\em Free rings and Their relations}.
\newblock London, Academic Press, 1985.

\bibitem{Dar97}
R.~Darling and J.~Newman.
\newblock On the short behavior of porous intercalation electrodes.
\newblock {\em Journal of the Electrochemical Society}, 144(9):3057--3063,
  1997.

\bibitem{Fli90a}
M.~Fliess.
\newblock Some basic structural properties of generalized linear systems.
\newblock {\em Systems Control Letters}, 15:391--396, 1990.

\bibitem{Fli92a}
M.~Fliess.
\newblock A remark on {W}illem's trajectory characterization of linear
  controllability.
\newblock {\em Systems Control Letters}, 19:43--45, 1992.

\bibitem{Fli97a}
M.~Fliess and R.~Hotzel.
\newblock Sur les syst{\`e}mes lin{\'e}aires {\`a} d{\'e}rivation non
  enti{\`e}re.
\newblock {\em Comptes Rendus de l'Acad{\'e}mie des Sciences - Series IIB -
  Mechanics-Physics-Chemistry-Astronomy}, 324(2):99 -- 105, 1997.

\bibitem{Fli92}
M.~Fliess, J.~L{\'e}vine, Ph. Martin, and P.~Rouchon.
\newblock Sur les syst{\`e}mes non lin{\'e}aires diff{\'e}rentiellement plats.
\newblock I-315:619--624, 1992.

\bibitem{Fli95}
M.~Fliess, J.~L{\'e}vine, Ph. Martin, and P.~Rouchon.
\newblock Flatness and defect of nonlinear systems: introductory theory and
  examples.
\newblock {\em International Journal of Control}, 61(6):1327--1361, 1995.

\bibitem{Fli99}
M.~Fliess, J.~L\'{e}vine, Ph. Martin, and P.~Rouchon.
\newblock A {L}ie-{B}{\"a}cklund approach to equivalence and flatness of
  nonlinear systems.
\newblock {\em IEEE Transaction of Automatic Control}, 44(44):922--937, 1999.

\bibitem{Gan60}
F.R. Gantmacher.
\newblock {\em Theory of Matrices}, volume~1.
\newblock Chelsea Publishing Company, New York, 1960.

\bibitem{Hot98b}
R.~Hotzel.
\newblock {\em Contribution {\`a} la th{\'e}orie structurelle et la commande
  des syst{\`e}mes lin{\'e}aires fractionnaires}.
\newblock PhD thesis, Universit\'{e} Paris-sud, Paris, France, 1998.

\bibitem{Kai80}
T.~Kailath.
\newblock {\em Linear Systems}.
\newblock Prentice-Hall, Englewood Cliffs, NJ, 1980.

\bibitem{Lar00}
B.~Laroche.
\newblock {\em Extension de la notion de platitude {\`a} des syst{\`e}mes
  d{\'e}crits par des {\'e}quations aux d{\'e}riv{\'e}es partielles
  lin{\'e}aires}.
\newblock PhD thesis, Ecole Nationale Sup{\'e}rieure des Mines de Paris,
  France, 2000.

\bibitem{Lar98}
B.~Laroche, Ph. Martin, and P.~Rouchon.
\newblock Motion planning for a class of partial differential equations with
  boundary control.
\newblock {\em IEEE-CDC}, pages 3494--3497, Tampa 1998.

\bibitem{Lev09}
J.~L\'{e}vine.
\newblock {\em Analysis and Control of Nonlinear Systems A Flatness-based
  Approach}.
\newblock Springer Berlin Heidelberg, 2009.

\bibitem{Lev03}
J.~L{\'e}vine and D.~V. Nguyen.
\newblock Flat output characterization for linear systems using polynomial
  matrices.
\newblock {\em System and Control Letters}, 48:69--75, 2003.

\bibitem{Mag08}
R.~L. Magin, O.~Abdullah, D.~Baleanu, and X.~J. Zhou.
\newblock Anomalous diffusion expressed through fractional order differential
  operators in the {B}loch-{T}orrey equation.
\newblock {\em Journal of {M}agnetic {R}esonance}, 190:255--270, 2008.

\bibitem{Mal09}
R.~Malti, J.~Sabatier, and H.~Ak{\c c}ay.
\newblock Thermal modeling and identification of an aluminium rod using
  fractional calculus.
\newblock In {\em 15th IFAC Symposium on System Identification (SYSID'2009)},
  pages 958--963, St Malo, France, 2009.

\bibitem{Mat96a}
D.~Matignon and B.~{{D'A}ndr\'{e}a-Novel}.
\newblock Some results on controllability and observability of
  finite-dimensional fractional differential systems.
\newblock In {\em IMACS}, volume~2, pages 952--956, Lille, France, July 1996.
  IEEE-SMC.

\bibitem{Mat97}
D.~Matignon and B.~d'Andr\'{e}a Novel.
\newblock Observer-based controllers for fractional differential systems.
\newblock In {\em 36th IEEE Conference on Decision and Control}, pages
  4967--4972. IEEE, December 1997.

\bibitem{Mik83}
J.~Mikusi\'{n}ski.
\newblock {\em Operational Calculus}.
\newblock Pergamon, Oxford \& PWN, Varsovie, 1983.

\bibitem{Mil93}
K.S. Miller and B.~Ross.
\newblock {\em An introduction to the fractional calculus and fractional
  differential equations}.
\newblock A Wiley-Interscience Publication, 1993.

\bibitem{Mor02}
X.~Moreau, C.~Ramus-Serment, and A.~Oustaloup.
\newblock Fractional differentiation in passive vibration control.
\newblock {\em International Journal of Nonlinear Dynamics and Chaos in
  Engineering Systems}, Vol. 29(1-4):343--362, July 2002.

\bibitem{Old74}
K.B. Oldham and J.~Spanier.
\newblock {\em The fractional calculus - Theory and Applications of
  Differentiation and Integration to Arbitrary Order}.
\newblock Academic Press, New-York and London, 1974.

\bibitem{Ous95}
A.~Oustaloup.
\newblock {\em La d{\'e}rivation non-enti{\`e}re}.
\newblock Herm{\`e}s, Paris, 1995.

\bibitem{Ous14}
A.~Oustaloup.
\newblock {\em Diversity and Non-integer Differentiation for System Dynamics}.
\newblock Wiley-ISTE, 2014.

\bibitem{Pod99a}
I.~Podlubny.
\newblock {\em Fractional Differential Equations}.
\newblock Academic Press, San Diego, 1999.

\bibitem{Pol98}
J.W. Polderman and J.C. Willems.
\newblock {\em Introduction to Mathematical System Theory: a behavioral
  approach}.
\newblock Springer-Verlag, Berlin, 1998.

\bibitem{Rot04}
F.~Rotella.
\newblock Commande des syst{\`e}mes par platitude.
\newblock note de cours, Ecole Nationale d'Ing{\'e}nieurs de Tarbes, 2004.

\bibitem{Rud03}
J.~Rudolph.
\newblock {\em Flatness Based Control of Distributed Parameter Systems}.
\newblock Shaker Verlag, Aachen, Germany, 2003.

\bibitem{Sam93}
S.G. Samko, A.A. Kilbas, and O.I. Marichev.
\newblock {\em Fractional integrals and derivatives: theory and applications}.
\newblock Gordon and Breach Science, 1993.

\bibitem{Sed01}
A.~Sedoglavic.
\newblock {\em M\'{e}thodes seminum\'{e}riques en alg\`{e}bre %
  diff\'{e}rentielle - applications \`{a} l'\'{e}tude des propri\'{e}t\'{e}s %
  structurelles de syst\`{e}mes diff\'{e}rentiels alg\'{e}briques en %
  automatique}.
\newblock PhD thesis, \'{E}cole polytechnique, 2001.

\bibitem{Sir04}
H.~Sira-Ramirez and S.K. Agrawal.
\newblock {\em Differential flat systems}.
\newblock Marcel Dekker, New York, 2004.

\bibitem{Som07a}
L~Sommacal, P.~Melchior, J.M. Cabelguen, A.~Oustaloup, and A.~Ijspeert.
\newblock {\em Advances in Fractional Calculus Theoretical Developments and
  Applications in Physics and Engineering}, chapter Fractional Multimodels for
  the Gastrocnemius Muscle for Tetanus Pattern, pages 271--285.
\newblock Springer, sabatier, j. and agrawal, o. p. and tenreiro machado, j. a.
  (eds) edition, 2007.

\bibitem{Tre92}
H.L. Trentelman.
\newblock On flat systems behaviors and observable image representations.
\newblock {\em Systems and Control Letters}, 19:43--45, 1992.

\bibitem{Vic11b}
S.~Victor, P.~Melchior, R.~Malti, and A.~Oustaloup.
\newblock Path tracking with flatness and crone control for fractional systems.
\newblock In {\em The 18th IFAC World Congress (IFAC'11)}, Milan, Italy, 2011.

\bibitem{Vic10a}
S.~Victor, P.~Melchior, and A.~Oustaloup.
\newblock Robust path tracking using flatness for fractional linear mimo
  systems: a thermal application.
\newblock {\em {C}omputers and {M}athematics with {A}pplication (CAMWA)},
  59(5):1667--1678, March 2010.
\newblock doi:10.1016/j.camwa.2009.08.008.

\bibitem{Wol74}
W.~A. Wolovich.
\newblock {\em Series in Applied Mathematical Systems}, volume~11, chapter
  Linear Multivariable Systems.
\newblock Springer, New York, 1974.

\end{thebibliography}
\end{document}